\setlist{nosep}
\DeclareMathAlphabet{\mathpzc}{OT1}{pzc}{m}{it} 
\let\PLAINthebibliography\thebibliography
\renewcommand\thebibliography[1]{
  \PLAINthebibliography{#1}
  \setlength{\parskip}{0.5pt}
  \setlength{\itemsep}{0.5pt plus .3ex}
}
\def\termscale{.9}
\newcommand\term[1]{\scalebox{\termscale}{$#1$}}
\newcommand{\ElectricFluxDensity}{E}
\definecolor{nyulight}{RGB}{107, 33, 158}
\definecolor{darkblue}{rgb}{0.05,0.25,0.65}
\definecolor{darkgreen}{RGB}{20,140,10}
\definecolor{lightgray}{rgb}{0.9,0.9,0.9}
\definecolor{darkorange}{RGB}{200,100,5}
\definecolor{darkyellow}{rgb}{.91,.91,0}
\newcommand{\yields}{\vdash}
\newcommand{\defneq}{\equiv}
\newcommand{\closed}{\mathrm{clsd}}
\newcommand{\Differential}{\mathrm{d}}
\newcommand{\differential}{\Differential}
\newcommand{\Sets}{
  \mathrm{Set}
}
\newcommand{\Groups}{
  \mathrm{Grp}
}
\newcommand{\RealNumbers}{
  \mathbb{R}
}
\newcommand{\ImaginaryUnit}{
  \mathrm{i}
}
\newcommand{\ComplexNumbers}{\mathbb{C}}
\newcommand{\CyclicGroup}[1]{\mathbb{Z}_{#1}}
\newcommand{\ZTwo}{
  \CyclicGroup{2}
}
\newcommand{\UnitaryGroup}{
  \mathrm{U}
}
\newcommand{\CircleGroup}{
  {\UnitaryGroup(1)}
}
\newcommand{\compact}{\cpt}
\newcommand{\proofstep}[1]{
  \mbox{\footnotesize #1}
}
\DeclareRobustCommand{\rchi}{{\mathpalette\irchi\relax}}
\newcommand{\irchi}[2]{\raisebox{\depth}{$#1\chi$}} 
\newif\if@sup
\newtoks\@sups
\def\append@sup#1{\edef\act{\noexpand\@sups={\the\@sups #1}}\act}%
\def\reset@sup{\@supfalse\@sups={}}%
\def\mk@scripts#1#2{\if #2/ \if@sup ^{\the\@sups}\fi \else%
  \ifx #1_ \if@sup ^{\the\@sups}\reset@sup \fi {}_{#2}%
  \else \append@sup#2 \@suptrue \fi%
  \expandafter\mk@scripts\fi}
\def\tensor#1#2{\reset@sup#1\mk@scripts#2_/}
\def\multiscripts#1#2#3{\reset@sup{}\mk@scripts#1_/#2%
  \reset@sup\mk@scripts#3_/}
\newbox\slashbox \setbox\slashbox=\hbox{$/$}
\def\itex@pslash#1{\setbox\@tempboxa=\hbox{$#1$}
  \@tempdima=0.5\wd\slashbox \advance\@tempdima 0.5\wd\@tempboxa
  \copy\slashbox \kern-\@tempdima \box\@tempboxa}
\def\slash{\protect\itex@pslash}
\def\clap#1{\hbox to 0pt{\hss#1\hss}}
\def\mathllap{\mathpalette\mathllapinternal}
\def\mathrlap{\mathpalette\mathrlapinternal}
\def\mathclap{\mathpalette\mathclapinternal}
\def\mathllapinternal#1#2{\llap{$\mathsurround=0pt#1{#2}$}}
\def\mathrlapinternal#1#2{\rlap{$\mathsurround=0pt#1{#2}$}}
\def\mathclapinternal#1#2{\clap{$\mathsurround=0pt#1{#2}$}}
\DeclareSymbolFont{symbolsC}{U}{txsyc}{m}{n}
\newcommand{\TopologicalFields}{\mathrm{TopFields}_\Sigma}
\newtheorem{theorem}{Theorem}[section]
\newtheorem{lemma}[theorem]{Lemma}
\theoremstyle{definition}
\newtheorem{example}[theorem]{Example}
\newtheorem{remark}[theorem]{Remark}
\renewcommand{\emph}{\textit}
\newcommand{\cpt}{{\cup\{\infty\}}}
\newcommand{\plus}{{\sqcup\{\infty\}}}
\begin{document}

\setlength{\abovedisplayskip}{2pt}
\setlength{\belowdisplayskip}{2pt}
\setlength{\abovedisplayshortskip}{-10pt}
\setlength{\belowdisplayshortskip}{2pt}

\title{Quantum observables of Quantized fluxes}

\author{
  Hisham Sati${}^{\ast \dagger}$
  \;\;
  and
  \;\;
  Urs Schreiber${}^{\ast}$
}

\maketitle

\thispagestyle{empty}

\begin{abstract}
While it has become widely appreciated that defining (higher) gauge theories requires, in addition to ordinary phase space data, also 
``flux quantization'' laws in generalized differential cohomology, there has been little discussion of the general rules, if any, for 
lifting Poisson-brackets of (flux-)observables and their quantization from traditional phase spaces to the resulting higher moduli 
stacks of flux-quantized gauge fields.

\smallskip 
In this short note, we present a systematic analysis of {\bf (i)} the canonical quantization of flux observables in Yang-Mills theory and {\bf (ii)} of valid flux 
quantization laws in abelian Yang-Mills, observing {\bf (iii)} that the resulting topological quantum observables form 
the homology Pontrjagin algebra of the loop space of the moduli space of flux-quantized gauge fields. 

\smallskip 
This is remarkable because the homology Ponrjagin algebra on loops of moduli makes immediate sense in broad generality for higher 
and non-abelian (non-linearly coupled) gauge fields, such as for the C-field in 11d supergravity, where it recovers the quantum 
effects previously discussed in the context of ``Hypothesis H''.
\end{abstract}

\vspace{.3cm}

\begin{center}
\begin{minipage}{11.5cm}
\tableofcontents
\end{minipage}
\end{center}

\vfill

\hrule
\vspace{5pt}

{
\footnotesize
\noindent
\def\arraystretch{1}
\tabcolsep=0pt
\begin{tabular}{ll}
${}^*$\,
&
Mathematics, Division of Science; and
\\
&
Center for Quantum and Topological Systems,
\\
&
NYUAD Research Institute,
\\
&
New York University Abu Dhabi, UAE.  
\end{tabular}
\hfill
\adjustbox{raise=-15pt}{
\includegraphics[width=3cm]{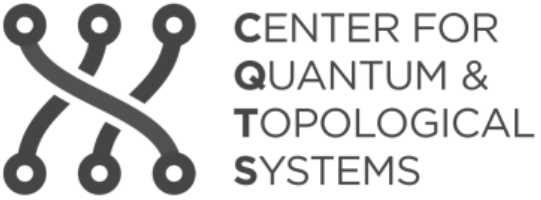}
}
}

\vspace{1mm} 
\noindent 
{
\footnotesize
${}^\dagger$The Courant Institute for Mathematical Sciences, NYU, NY
}

\vspace{.2cm}

\noindent
\scalebox{.82}{
The authors acknowledge the support by {\it Tamkeen} under the 
{\it NYU Abu Dhabi Research Institute grant} {\tt CG008}.
}

\medskip

\newpage

Electromagnetic fluxes in (higher) gauge theories are generally subject to ``quantization laws'' (``flux quantization'', ``charge quantization'', going back to \cite{Dirac31}\cite[\S 2]{Alvarez85} and others, we follow 
\cite[Intro]{Char}\cite{SS23PhaseSpace} with exposition in \cite{SS24FluxQuantization}, see also \cite{Freed00} 
and \cref{FluxQuantizationInAbelianYM} below), broadly in the sense of ``discretization'': In the familiar abelian case without ``self-sourcing'' of fluxes, these laws 
imply that total fluxes through (and hence charges inside) closed hypersurfaces form lattices (``charge lattices''), hence are integer multiples 
of certain unit fluxes (unit charges). This picture generalizes (with the recent construction of the non-abelian character map in \cite{Char}) 
to theories with non-abelian  fluxes (a famous example being the C-field in 11d supergravity, see \cite[Ex. 2.12]{SS24FluxQuantization}\cite{GSS24Supergravity}), now flux-quantized in non-abelian differential cohomology as discussed in the companion article \cite{SS23PhaseSpace}. In any case, such flux-quantization applies already to classical gauge fields (as soon as they serve as background fields for charged quantum probes).

\smallskip

On top of this, there is the actual quantization of fluxes, whereby quantum observables on fluxes form a non-commutative star-algebra (e.g. \cite[\S 6]{BogolyubovEtAl90}), 
reflecting quantum uncertainties (see \cref{QuantumObservablesOnYAngMillsFluxes}).

\smallskip

It should be clear that a deeper understanding of quantum gauge field theory requires an understanding of the combination of these two quantum effects, namely of quantum observables on quantized 
fluxes (cf. \cref{ConclusionAndOutlook}). However, existing discussions of the two aspects are mostly disjoint, among the exceptions being \cite{FMS07a}\cite{FMS07b}\cite{BBSS17} (to which we relate in the following, as we proceed).

\medskip

In this brief note we mean to clarify some general principles behind topological quantum observables on flux-quantized gauge fields, by making some observations  (Thms. \ref{PhaseSpaceOfYangMillsFluxes}, \ref{TopologicalQuantumObservablesOnYMFluxes} \& \ref{PontrjaginRingOfMaxwellFluxObservables}) 
which are not hard to prove but whose importance seems not to have been appreciated before, while they arguably touch on the heart of the matter.
Background discussion and proofs are relegated to \cref{BackgroundAndProofs}. 

\medskip

\noindent
{\bf Outline.}
\begin{itemize}[
  leftmargin=1.1cm,
  topsep=2pt,
  itemsep=2pt
]
\item[in \S\ref{QuantumObservablesOnYAngMillsFluxes}] we consider the non-perturbative (Rieffel $C^\ast$-algebraic) quantization of fluxes in ordinary Yang-Mills theory and observe that this involves making global choices and that the resulting star-algebra of {\it topological} observables is a convolution algebra of cohomology groups depending on these choices,

\item[in \S\ref{FluxQuantizationInAbelianYM}] we observe that for abelian Yang-Mills theory (electromagnetism) the same kind of topological choices parameterize the available ``flux quantization'' laws (including but going beyond the familiar Dirac charge quantization), in which form the situation generalizes to all ``higher'' (Maxwell-type) gauge theories such as found in higher dimensional supergravity theories,

\item[in \S\ref{PontrjaginAlgebrasOfQuantumObservables}] we bring these two observations together by showing that the algebra of topological quantum observables of (abelian) Yang-Mills theory (as per \S \ref{QuantumObservablesOnYAngMillsFluxes}) may equivalently be computed as the ``Pontrjagin homology algebra'' of the corresponding moduli spaces of quantized fluxes (as per \S\ref{FluxQuantizationInAbelianYM}), in which form this quantization prescription immediately generalizes to topological flux observables in higher gauge theories, thereby justifying some previous proposals on this matter,

\item[in \S\ref{ConclusionAndOutlook}] we close by pointing out a couple of further curious aspects which support the suggestion that the passage to Pontrjagin homology algebras of moduli of flux-quantized fields is ``the'' mechanism of quantization of topological flux observables in higher gauge theories.

\end{itemize}

\medskip
\medskip

\noindent
{\bf Acknowledgement.}
We thank Alberto Cattaneo for useful discussion.

\section{Quantum observables on Yang-Mills fluxes}
\label{QuantumObservablesOnYAngMillsFluxes}

\noindent
{\bf The phase space of fluxes.}
The following fundamental statement about classical observables on fluxes in Yang-Mills theory is implied by standard
facts about the phase space structure (cf. \cite[\S 3]{FriedmanPapastamatiou83}\cite[\S 2]{BassettoLazzizzeraSoldati84}) 
but seems not to have been noticed before, in its entirety (key observations are due to \cite{CattaneoPerez17}).
Consider $\mathfrak{g}$  a metric Lie algebra with pairing $\langle-,-\rangle$, $\Sigma$ a closed orientable surface (not necessarily connected) embedded in spacetime, and consider maps
$\alpha \,\in\, C^\infty(\Sigma, \mathfrak{g})$ 
as observables on fluxes that send the electric/magnetic flux density to its integral over
$\Sigma$ against $\langle\alpha,-\rangle$.

\begin{theorem}[Phase space of Yang-Mills fluxes]
  \label{PhaseSpaceOfYangMillsFluxes}
  The phase space of electromagnetic fluxes in $\mathfrak{g}$-Yang-Mills theory, through a closed orientable 
  surface $\Sigma$, is the Lie-Poisson manifold {\rm(e.g. \cite[\S 3]{Weinstein83})} associated with the Fr{\'e}chet Lie algebra
  of smooth maps into the semidirect product, via the adjoint action, of $\mathfrak{g}$ (with Lie bracket rescaled 
  by $\hbar \in \mathbb{R}_{> 0}$) on its underlying abelian Lie algebra $\mathfrak{g}_0$:
  \smallskip 
  \begin{equation}
    \label{LieAlgebraOfLinearYMFluxes}
    \underbrace{
    C^\infty\big(
      \Sigma
      ,\,
      (
        \mathfrak{g}_\hbar
        \ltimes_{{}_{\mathrm{ad}}}
        \mathfrak{g}_0
      )
    \big)
    }_{
      \mathclap{
      \scalebox{.7}{
        \color{gray}
        \bf
        \begin{tabular}{c}
          Linear observables
          on fluxes
        \end{tabular}
      }
      }
    }
    \;\;
    \simeq
    \;\;
    \underbrace{
    C^\infty\big(
      \Sigma
      \,,
      \mathfrak{g}_\hbar
    \big)
    }_{
      \mathclap{
      \scalebox{.7}{
        \rm
        \color{gray}
        \def\arraystretch{.9}
        \begin{tabular}{c}
          electric
        \end{tabular}
      }
      }
    }
    \ltimes_{{}_{\mathrm{ad}}}
    \underbrace{
      C^\infty\big(
        \Sigma
        \,,
        \mathfrak{g}_0
      \big)
    }_{
      \mathclap{
      \scalebox{.7}{
        \rm
        \color{gray}
        \def\arraystretch{.9}
        \begin{tabular}{c}
          magnetic
        \end{tabular}
      }
      }    
    }
    .
  \end{equation}
\end{theorem}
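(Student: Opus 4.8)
The plan is to reduce the claim to a computation of Poisson brackets among the \emph{linear} flux observables, after which the identification as a Lie--Poisson manifold is purely formal. First I would fix the canonical Yang--Mills phase space, modeled on (a Fréchet completion of) the cotangent bundle $T^\ast\mathcal{A}$ of spatial connections, carrying the canonical bracket $\{A^a_i(x), E^b_j(y)\} = \hbar\,\delta^{ab}\delta_{ij}\,\delta(x,y)$. Against a smearing field $\alpha \in C^\infty(\Sigma,\mathfrak{g})$ I would then introduce the two families of observables: the magnetic flux $\Phi_{\mathrm{mag}}[\alpha] \coloneqq \int_\Sigma \langle \alpha, F\rangle$, a functional of the connection alone through its curvature $F$; and the electric flux $\Phi_{\mathrm{el}}[\alpha] \coloneqq \int_\Sigma \langle \alpha, E\rangle$, built from the conjugate momentum $E$ and recognizable as the surface term of the smeared Gauss generator. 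Since these exhaust the linear functions on phase space, their mutual Poisson brackets alone determine the Lie algebra whose continuous dual carries the phase space.

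The heart of the argument is the evaluation of the three bracket types, following and completing the computations of \cite[\S 3]{FriedmanPapastamatiou83}\cite{CattaneoPerez17}. \emph{(i)}~The electric fluxes generate infinitesimal gauge rotations of the fields, and since they themselves transform in the adjoint, their bracket closes into the smeared current algebra $\{\Phi_{\mathrm{el}}[\alpha],\Phi_{\mathrm{el}}[\beta]\} = \hbar\,\Phi_{\mathrm{el}}\big([\alpha,\beta]\big)$, exhibiting the electric sector as $C^\infty(\Sigma,\mathfrak{g}_\hbar)$. \emph{(ii)}~As $\Phi_{\mathrm{mag}}$ depends only on the configuration variable, the magnetic fluxes Poisson-commute, $\{\Phi_{\mathrm{mag}}[\alpha],\Phi_{\mathrm{mag}}[\beta]\} = 0$, exhibiting the magnetic sector as the abelian $C^\infty(\Sigma,\mathfrak{g}_0)$. \emph{(iii)}~The mixed bracket is decisive: the gauge rotation generated by the electric flux acts on the curvature by $\delta_\beta F = [\beta, F]$, whence $\{\Phi_{\mathrm{el}}[\beta],\Phi_{\mathrm{mag}}[\alpha]\} = \hbar\,\Phi_{\mathrm{mag}}\big(\mathrm{ad}_\beta\alpha\big)$ after moving the bracket onto the smearing field using ad-invariance of $\langle-,-\rangle$. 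This is precisely the adjoint action defining the semidirect product.

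With the three brackets established, I would package them into the single Lie bracket on $C^\infty(\Sigma,\mathfrak{g}_\hbar) \ltimes_{\mathrm{ad}} C^\infty(\Sigma,\mathfrak{g}_0)$ and observe that, because the bracket on either side is computed pointwise over $\Sigma$, the underlying identification of a smooth $\mathfrak{g}_\hbar \ltimes_{\mathrm{ad}} \mathfrak{g}_0$-valued map with a pair of smooth $\mathfrak{g}$-valued maps is a canonical isomorphism of Fréchet Lie algebras $C^\infty\big(\Sigma,\,\mathfrak{g}_\hbar \ltimes_{\mathrm{ad}} \mathfrak{g}_0\big) \cong C^\infty(\Sigma,\mathfrak{g}_\hbar) \ltimes_{\mathrm{ad}} C^\infty(\Sigma,\mathfrak{g}_0)$, which is \eqref{LieAlgebraOfLinearYMFluxes}. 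Finally, since the linear observables close under the Poisson bracket into this Lie algebra, the Poisson structure they generate is by definition the Lie--Poisson (Kirillov--Kostant--Souriau) bracket on the continuous dual, so the flux phase space is the associated Lie--Poisson manifold in the sense of \cite[\S 3]{Weinstein83}.

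The step I expect to be the main obstacle is \emph{(iii)}, and more precisely the correct identification of the electric flux observable as a gauge generator: obtaining both the current algebra and the adjoint action hinges on a careful treatment of the Gauss-law boundary term and of the smearing over the embedded surface $\Sigma$ rather than over a full Cauchy slice, which is exactly where the observations of \cite{CattaneoPerez17} enter and where the full combined statement has apparently not been recorded before. A secondary, purely functional-analytic point is the justification of the infinite-dimensional Lie--Poisson picture (continuous duals, Fréchet-smoothness of the bracket); I would sidestep this by working throughout with the linear observables, for which the brackets and the isomorphism \eqref{LieAlgebraOfLinearYMFluxes} are manifestly well-defined.
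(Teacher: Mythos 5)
Your overall strategy matches the paper's (canonical phase space in temporal gauge, brackets of smeared surface fluxes, packaged as the semidirect-product current algebra, then Lie--Poisson), but the central computations as you state them do not go through, and the point where they fail is exactly the content of the theorem. With your definitions, $\Phi_{\mathrm{el}}[\alpha] \defneq \int_\Sigma \langle \alpha, E\rangle$ is a functional of the momentum $E$ \emph{alone} (the smearing $\alpha$ is a fixed section of the adjoint bundle, carrying no $A$-dependence), so its canonical Poisson bracket with any other functional of $E$ alone vanishes identically: your bracket (i) is $0$, not $\hbar\,\Phi_{\mathrm{el}}([\alpha,\beta])$. The justification you give --- that the electric fluxes ``generate infinitesimal gauge rotations'' --- is false for surface-smeared fluxes: gauge transformations are generated by the \emph{volume}-smeared Gauss constraint $\int_X \langle \differential_A \lambda, E\rangle$, which depends on both $A$ and $E$; the surface flux agrees with it only on the constraint surface (after integration by parts), and that difference is invisible on-shell but changes the Poisson brackets. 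Likewise your mixed bracket (iii) pairs two functional derivatives that are both distributions supported on the same $2$-surface $\Sigma$, so it is not even well defined --- this is precisely the ``non-smooth Hamiltonian vector field'' problem the paper highlights.

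The missing idea, which is the heart of the paper's proof (following Cattaneo--Perez), is the regularization you only gesture at in your final paragraph: one replaces the naive fluxes by
$\widehat{\Phi}_E^\alpha \defneq \int_{\widehat{\Sigma}} \langle \differential_A \alpha, E\rangle$ and
$\widehat{\Phi}_B^\beta \defneq \int_{\widehat{\Sigma}} \langle \differential_A \beta, F_A\rangle$,
where $\widehat{\Sigma}$ is the exterior component of a tubular neighborhood of $\Sigma$. These are genuine observables (smooth Hamiltonian vector fields), they reduce to your $\Phi_{\mathrm{el}}$, $\Phi_{\mathrm{mag}}$ on-shell (by the Gau{\ss} law and the Bianchi identity respectively), and --- crucially --- the electric one now carries $A$-dependence through $\differential_A\alpha$, which is what produces the non-abelian brackets
$\{\widehat{\Phi}_E^\alpha, \widehat{\Phi}_E^\beta\} = \widehat{\Phi}_E^{[\alpha,\beta]}$ and
$\{\widehat{\Phi}_E^\alpha, \widehat{\Phi}_B^\beta\} = \widehat{\Phi}_B^{[\alpha,\beta]}$
via the canonical bracket, the Leibniz rule, and ad-/cyclic invariance of the pairing. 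Citing \cite{CattaneoPerez17} as an anticipated ``obstacle'' does not repair the argument: without actually substituting the regularized observables, your steps (i) and (iii) assert brackets that your own definitions contradict, so the derivation of the semidirect-product structure --- and hence of the Lie--Poisson identification --- is not established.
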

\begin{proof} We discuss this in \cref{BackgroundOnPhaseSpaceOfYangMillsFluxes}.\end{proof}

\medskip

\noindent
{\bf Quantum flux observables as group algebras.}
Thm. \ref{PhaseSpaceOfYangMillsFluxes} is remarkable, because while the full non-perturbative quantization of Yang-Mills theory famously remains an open problem, the theorem solves the case of the flux sector, since
the non-perturbative (aka ``strict-'' or ``$C^\ast$-algebraic-''deformation or ``Rieffel-'')  
quantizations (\cite{Rieffel89}\cite{Rieffel94}, review in \cite[\S 2]{Landsman99}\cite[\S 4]{LandsmanRamazan01}\cite[\S 2]{Hawkins08}) 
of Lie-Poisson phase spaces 
are well-known \cite{Rieffel90}\cite[Ex. 11.1]{LandsmanRamazan01} \cite[Ex. 2]{Landsman99}. Indeed, 
upon choosing a Lie group integrating the given Lie algebra, the non-perturbative quantum observables on the Lie-Poisson 
space form its {\it group algebra} under the convolution product, formed with due attention to analytic issues. 
One may think of this (cf. \cite[Ex. 11.3]{LandsmanRamazan01} and \cite{BinzHoneggerRieckers07}) as a version of the
time-honored quantization step from canonical commutation relations in the form of Heisenberg Lie algebras to their 
exponentiated Weyl form (\cite[p. 571]{vonNeumann31}) of quantum observables.

\smallskip

Concretely in the case of Thm. \ref{PhaseSpaceOfYangMillsFluxes}, choose $G$ a Lie group 
(not necessarily connected) with Lie algebra $\mathfrak{g}$, and choose a linear representation of $G$ on the underlying 
vector space of $\mathfrak{g}$ which on the connected component $G_{\mathrm{e}}$ restricts to the adjoint action. 
Then for $\Lambda \subset \mathfrak{g}$ a lattice (not necessarily of full dimension, in fact possibly zero) which is 
preserved under this action, we obtain the corresponding semidirect product Lie group of $G$ with the (partial) torus 
$\mathfrak{g}_0/\Lambda$ and hence a Fr{\'e}chet Lie group of maps Lie-integrating \eqref{LieAlgebraOfLinearYMFluxes}:
\smallskip 
\begin{equation}
  \label{LieGroupOfLinearYMFluxes}
  \underbrace{
  C^\infty
  \big(
    \Sigma
    ,\,
    G 
      \,\ltimes\, 
    (\mathfrak{g}_0 / \Lambda)
  \big)
  }_{
    \mathclap{
    \scalebox{.7}{
      \color{gray}
      \bf
      \def\arraystretch{.9}
      \begin{tabular}{c}
        Exponentiated linear 
        \\
        observables on fluxes
      \end{tabular}
    }
    }
  }
  \;\;
  \simeq
  \;\;
  \underbrace{
  C^\infty
  \big(
    \Sigma
    ,\,
    G 
  \big)
  }_{
    \mathclap{
      \scalebox{.7}{
        \color{gray}
        \bf
        \def\arraystretch{.9}
        \begin{tabular}{c}
          electric
        \end{tabular}
      }
    }
  }
  \;
  \ltimes
  \;
  \underbrace{
  C^\infty
  \big(
    \Sigma
    ,\,
    (\mathfrak{g}_0 / \Lambda)
  \big)
  }_{
    \mathclap{
      \scalebox{.7}{
        \color{gray}
        \bf
        \def\arraystretch{.9}
        \begin{tabular}{c}
          magnetic
        \end{tabular}
      }
    }  
  }
  \,.
\end{equation}

A typical example of the choices involved for $\mathfrak{g} = \mathfrak{su}(2)$ is given by $G \,\defneq\, \mathrm{SU}(2)$ 
and $\Lambda = 0 \subset \mathfrak{su}(2)$. However, it is important to notice the freedom of choosing $G$ to be non-connected, 
which here is part of the usual freedom in choosing quantizations.
For instance, already for $\mathfrak{g} = \mathfrak{u}(1) \,\simeq\, \mathbb{R}$ we may choose $G$ to be the direct product group 
$\mathrm{U}(1) \times \ZTwo$, $\Lambda = \mathbb{Z} \subset \mathbb{R}$ and the action of $\mathrm{U}(1) \times \ZTwo$ on 
$\mathfrak{u}(1) \simeq \mathbb{R}$ to factor through the $\ZTwo$-action by multiplication with $-1$. With this choice, the 
coefficient group in \eqref{LieGroupOfLinearYMFluxes} is the non-abelian group $\mathrm{U}(1) \times \ZTwo \ltimes \mathrm{U}(1)$, 
reflecting a non-trivial commutator between electric and magnetic flux observables;
see Ex. \ref{MoreGeneralStrictDeformationQuantizationOfMaxwellFluxes} below.

\medskip

\noindent
{\bf Topological quantum observables on fluxes.}
Here we are not concerned with the analytical fine-print of the convolution algebra on \eqref{LieGroupOfLinearYMFluxes}; 
instead, we focus on just its subsector of {\it topological flux observables}, namely those that are locally constant as functions on the 
group manifold, and as such form the subalgebra which is the ordinary group algebra \eqref{DiscreteConvolutionProduct}
of the group of connected components of 
\eqref{LieGroupOfLinearYMFluxes}. 
Interestingly, this group of connected components of \eqref{LieGroupOfLinearYMFluxes} is, by the smooth
Oka principle \eqref{SmoothOkaPrinciple}, equivalently a (possibly non-abelian) cohomology group of $\Sigma$:
\smallskip 
\begin{equation}
  \label{DiscreteGroupOfTopologicalYMFluxes}
  \def\arraystretch{1}
  \begin{array}{rcll}
  \underbrace{
  \pi_0
  \,
  C^\infty
  \big(
    \Sigma
    ,\,
    G 
      \,\ltimes\, 
    (\mathfrak{g}_0 / \Lambda)
  \big)  
  }_{
    \mathclap{
    \scalebox{.7}{
      \color{gray}
      \bf
      \def\arraystretch{.9}
      \begin{tabular}{c}
        Topological sectors of
        \\
        exponentiated linear
        \\
        observables on fluxes
      \end{tabular}
    }
    }
  }
  &\simeq&
  \pi_0
  \,
  \mathrm{Map}\big(
    \Sigma
    ,\,
    G 
      \ltimes
    B \Lambda
  \big)
  &
  \proofstep{
    by 
    \eqref{SmoothOkaPrinciple}
  }
  \\[-30pt]
  &\simeq&
  \pi_0
  \,
  \mathrm{Map}
  \big(
    \Sigma
    ,\,
    G 
  \big)    
  \,\ltimes\,
    \pi_0
    \,
    \mathrm{Map}
    \big(
      \Sigma
      ,\,
      B \Lambda
    \big)    
  &
  \proofstep{
    by
    \eqref{SmoothOkaIntoTorus}
  }
  \\[+6pt]
  &\simeq&
  \underbrace{
  H^0\big(
    \Sigma
    ;\,
    G
  \big)}_{
    \mathclap{
      \scalebox{.7}{
        \color{gray}
        \bf
        \def\arraystretch{.9}
        \begin{tabular}{c}
          electric
        \end{tabular}
      }
    }
  }
  \ltimes
  \underbrace{
  H^1\big(
    \Sigma
    ;\,
    \Lambda
  \big)
  }_{
    \mathclap{
      \scalebox{.7}{
        \color{gray}
        \bf
        \def\arraystretch{.9}
        \begin{tabular}{c}
          magnetic
        \end{tabular}
      }
    }  
  }
  &
  \proofstep{
    by
    \eqref{ZeroCohomologyWithCoefficientsInATopologicalGroup}.
  }
  \end{array}
\end{equation}
\noindent
In the last line, we retain the topology on the cohomology coefficients $G$; 
see Rem. \ref{OrdinaryCohomologyWithTopologicalGroupCoefficients}.

\smallskip

In conclusion, combining Thm. \ref{PhaseSpaceOfYangMillsFluxes} with Rieffel-quantization of Lie-Poisson structures yields:
\begin{theorem}[Non-perturbative topological quantum observables on Yang-Mills fluxes]
\label{TopologicalQuantumObservablesOnYMFluxes}
  The convolution group algebra \eqref{DiscreteConvolutionProduct} on the (possibly non-abelian) cohomology group \eqref{DiscreteGroupOfTopologicalYMFluxes}
  \smallskip
  \begin{equation}
    \label{GroupAlgebraOnCohomologyGroupReflectingTopologicalFLuxes}
    \mathbb{C}\big[
      \pi_0
      \,
      \mathrm{Map}(
        \Sigma
        ;\,
        G \ltimes B \Lambda
      )
    \big]
    \;\;
    \simeq
    \;\;
    \mathbb{C}\big[
      H^0(\Sigma;G)
      \ltimes
      H^1(\Sigma; \Lambda)
    \big]    
  \end{equation}

\smallskip 
\noindent is a subalgebra of topological 
observables in a non-perturbative Rieffel-quantization of the phase space from Thm. \ref{PontrjaginRingOfMaxwellFluxObservables}
of fluxes in $\mathfrak{g}$-Yang-Mills.
\end{theorem}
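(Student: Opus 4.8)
The plan is to assemble three ingredients that are already in place. First, Thm.~\ref{PhaseSpaceOfYangMillsFluxes} identifies the flux phase space with the Lie--Poisson manifold attached to the Fr\'echet Lie algebra \eqref{LieAlgebraOfLinearYMFluxes}. Second, the standard theory of strict ($C^\ast$-algebraic, Rieffel) deformation quantization of Lie--Poisson spaces (\cite{Rieffel89}\cite{Rieffel94}, and \cite[Ex.~11.1]{LandsmanRamazan01}) asserts that a choice of integrating Lie group turns the non-perturbative quantum observables into the convolution (group) algebra of that group. Third, for the data of a Lie group $G$ with Lie algebra $\mathfrak{g}$, an invariant lattice $\Lambda \subset \mathfrak{g}_0$, and the chosen linear action, the Fr\'echet Lie group \eqref{LieGroupOfLinearYMFluxes}, namely $H := C^\infty(\Sigma, G \ltimes (\mathfrak{g}_0/\Lambda))$, Lie-integrates \eqref{LieAlgebraOfLinearYMFluxes}. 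Combining these, the full quantization of the flux phase space is the convolution algebra of $H$, and it remains only to isolate within it the subalgebra of topological observables and to rewrite that via cohomology.

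For the extraction step I would argue as follows. By definition the \emph{topological} observables are those locally constant as functions on the group manifold $H$; since the identity component $H_e$ is connected, local constancy is constancy on each connected component, so such observables are exactly the pullbacks along the quotient $q \colon H \twoheadrightarrow \Gamma := \pi_0 H$ of functions on the discrete group $\Gamma$. Because $\pi_0$ is a functor to groups, convolution respects the component grading (support in $\gamma_1$ convolved with support in $\gamma_2$ lands in $\gamma_1\gamma_2$), so on the locally-constant sector the convolution product formally reduces to the discrete group-algebra product \eqref{DiscreteConvolutionProduct} (the normalization of this reduction being the analytic point taken up below), and one expects the topological observables to form a subalgebra isomorphic to $\mathbb{C}[\Gamma]$. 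Finally the smooth Oka principle \eqref{DiscreteGroupOfTopologicalYMFluxes} rewrites $\Gamma \simeq H^0(\Sigma;G) \ltimes H^1(\Sigma;\Lambda)$, yielding the asserted isomorphism \eqref{GroupAlgebraOnCohomologyGroupReflectingTopologicalFLuxes}.

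The hard part will be making ``locally constant functions form a subalgebra'' analytically honest inside the Rieffel $C^\ast$-completion: the components of $H$ are non-compact and infinite-dimensional, so a function that is a nonzero constant on a component has infinite mass and cannot be naively convolved (the product of two component-indicators diverges by $\mathrm{vol}(H_e)$). The clean remedy I would pursue is to exhibit a genuine group-theoretic splitting of the extension $1 \to H_e \to H \to \Gamma \to 1$, realizing $\Gamma$ as a discrete (Lie) subgroup of $H$; point-masses at a homomorphic section then satisfy $\delta_{s(\gamma_1)} \ast \delta_{s(\gamma_2)} = \delta_{s(\gamma_1\gamma_2)}$, and by functoriality of the group $C^\ast$-algebra this embeds $\mathbb{C}[\Gamma] = C^\ast(\Gamma)$ as the desired topological subalgebra of $C^\ast(H)$. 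Such a splitting is available componentwise: constant $G$-valued maps split the $H^0(\Sigma;G)$-factor, while flat (harmonic, for a chosen metric on $\Sigma$) $(\mathfrak{g}_0/\Lambda)$-valued maps split the $H^1(\Sigma;\Lambda)$-factor. The one point needing genuine verification is that these two splittings are compatible with the $\mathrm{ad}$-twisted semidirect product in \eqref{LieGroupOfLinearYMFluxes}, i.e. that constant gauge transformations carry flat torus-representatives to flat torus-representatives in the expected linear fashion; granting this, the two splittings assemble into a splitting of the full semidirect extension and the proof concludes.
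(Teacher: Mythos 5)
Your first two paragraphs reproduce the paper's own argument: the theorem is stated there precisely as the conclusion of the chain you describe --- Thm.~\ref{PhaseSpaceOfYangMillsFluxes}, plus Rieffel--Landsman quantization of Lie--Poisson structures as the convolution algebra of an integrating group \eqref{LieGroupOfLinearYMFluxes}, plus identification of the locally constant sector with the group algebra \eqref{DiscreteConvolutionProduct} of $\pi_0$, computed via the smooth Oka principle as in \eqref{DiscreteGroupOfTopologicalYMFluxes}. The paper moreover says explicitly that it is ``not concerned with the analytical fine-print of the convolution algebra'', so your third paragraph attempts something the paper deliberately brackets. That ambition is welcome --- you correctly identify that naively convolving component-indicator functions diverges by $\mathrm{vol}(H_e)$ --- but your proposed remedy contains a genuine error.

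The error: constant $G$-valued maps do \emph{not} split the $H^0(\Sigma;G)$-factor. Constant maps realize only $\pi_0(G)$ inside $H^0(\Sigma;G) = \pi_0\,\mathrm{Map}(\Sigma,G)$, and this inclusion is far from surjective: by \eqref{ZerothCircleCohomology}, for $G = \mathrm{U}(1)$ and $\Sigma = \mathbb{T}^2$ one has $H^0(\Sigma;\mathrm{U}(1)) \simeq H^1(\Sigma;\mathbb{Z}) \simeq \mathbb{Z}^2$, while every constant map lies in the trivial class. These winding classes are exactly the ``electric'' topological observables the theorem is about (cf.\ Ex.~\ref{AStrictDeformationQuantizationOfMaxwellFluxes}), so the omission is not peripheral: your section $s$ misses the entire electric charge lattice. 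The repair, for toral $G$, is the same device you used on the magnetic side: based harmonic representatives, i.e.\ maps $f \colon \Sigma \to G$ with $\Delta f = 0$ and $f(\sigma_0) = 0$ for a fixed $\sigma_0 \in \Sigma$, which form a subgroup of $C^\infty(\Sigma,G)$ mapping isomorphically onto $H^0(\Sigma;G)$; one then still checks compatibility with the semidirect structure as you indicate (automatic when the connected part of $G$ acts trivially by $\mathrm{ad}$, as in the paper's abelian and crossed examples). For genuinely non-abelian $G$, however, the existence of any homomorphic section of $C^\infty(\Sigma,G) \twoheadrightarrow \pi_0$ is unclear, so your strategy does not obviously cover the stated generality of the theorem --- which is presumably one reason the paper confines itself to the formal locally-constant sector. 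Two smaller imprecisions in the same paragraph: the point masses $\delta_{s(\gamma)}$ are multipliers, so your embedding lands in the multiplier algebra $M(C^\ast(H))$ rather than in $C^\ast(H)$ itself; and $\mathbb{C}[\Gamma]$ is a dense subalgebra of $C^\ast(\Gamma)$, not equal to it.
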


\begin{example}[Non-perturbative topological quantum observables on Maxwell fluxes]
  \label{AStrictDeformationQuantizationOfMaxwellFluxes}  
  For $\mathfrak{u}(1)$-Yang-Mills theory (vacuum Maxwell theory) an evident choice in \eqref{LieGroupOfLinearYMFluxes} of Lie group 
  $G$ and lattice $\Lambda$  is $G \,\defneq\, \mathrm{U}(1)$ and 
  $\Lambda = \mathbb{Z} \hookrightarrow \mathbb{R} \simeq_{{}_{\mathbb{R}}} \mathfrak{u}(1)$.
  In this case $H^0(\Sigma;\, G) \,\simeq\, H^1(\Sigma; \mathbb{Z})$ (see \eqref{ZerothCircleCohomology}) and
  also $H^1(\Sigma;\, \Lambda) = H^1(\Sigma;\, \mathbb{Z})$, so that the algebra of topological flux quantum observables 
  from \eqref{GroupAlgebraOnCohomologyGroupReflectingTopologicalFLuxes} is this group algebra:
  \smallskip 
  \begin{equation}
    \label{VanillaMaxwellFluxQuantumObservanles}
    \mathbb{C}\big[
      \mathrm{Map}\big(
        \Sigma
        ,
        \underbrace{\CircleGroup}_{
          \scalebox{.7}{
            \color{gray}
            electric
          }
        }
        \times \!\!
        \underbrace{\CircleGroup}_{
          \scalebox{.7}{
            \color{gray}
            magnetic
          }
        }
     \hspace{-2mm} \big)
    \big]
    \;\;
    \simeq
    \;\;
    \mathbb{C}\big[
      \underbrace{
        H^1(\Sigma;\,\mathbb{Z})
      }_{
        \scalebox{.7}{
          \color{gray}
          electric
        }
      }
      \times
      \underbrace{
        H^1(\Sigma;\,\mathbb{Z})
      }_{
        \scalebox{.7}{
          \color{gray}
          magnetic
        }
      }
    \big]
    \,.
  \end{equation}
  Notice that, while this algebra is commutative, it is in general distinct from (and non-isomorphic to) the algebra of classical 
  observables with its pointwise (non-convoluting) product. The pointwise product sees the topological flux sectors as 
  ``superselection'' sectors, whose Hilbert space decomposes as a direct sum indexed by electric and magnetic flux, such 
  that all observables are block-diagonal with respect to this decomposition.

  In contrast, the quantum algebra \eqref{VanillaMaxwellFluxQuantumObservanles} has operator representations by tuples of 
  unitary operators, mutually commuting with each other but each acting by {\it shifting by a unit} through the lattice of 
  topological sectors. (This result is different from the proposal in \cite[p. 20]{FMS07b}, but not unlike in spirit.)

Concretely, consider the simple case that $\Sigma \,\simeq\, \mathbb{T}^2$ is a torus. Then
  $H^1\big(\mathbb{T}^2;\, \mathbb{Z}\big) \,\simeq\, \mathbb{Z}_a \times \mathbb{Z}_b$, such that, under the
  identification with $H^0\big(\mathbb{T}^2; S^1\big)$ (see \eqref{ZerothCircleCohomology}), an element
  $\vec n \defneq (n_a, n_b) \in \mathbb{Z}_a \times \mathbb{Z}_b$ is the homotopy class of a $\mathfrak{u}(1)$-valued smearing 
  function \eqref{SmearingFunction}
  on $\mathbb{T}^2$ exponentiated to a $\CircleGroup$-valued function
  $\mathbb{T}^2 \to \CircleGroup$ which winds $n_{a}$ and $n_b$ times around $\CircleGroup \,\simeq\, S^1$ as its arguments
  travel once around one or the other nontrivial cycle of $\mathbb{T}^2$, respectively. Denoting by 
  $$
    \begin{tikzcd}[
      row sep=-3pt,
      column sep=10pt
    ]
      H^1\big(
        \mathbb{T}^2;\,\mathbb{Z}
      \big)
      \times
      H^1\big(
        \mathbb{T}^2;\,\mathbb{Z}
      \big)
      \ar[
        rr,
        hook
      ]
      &&
      \mathbb{C}\big[
      H^1\big(
        \mathbb{T}^2;\,\mathbb{Z}
      \big)
      \times
      H^1\big(
        \mathbb{T}^2;\,\mathbb{Z}
      \big)
      \big]
      \\
      \term{   
       \big(
        \vec n^{\,\mathrm{el}},
        \vec n^{\,\mathrm{mag}}
      \big)
      }
      \ar[
        rr,
        phantom,
        "{ \longmapsto }"
      ]
      &&
      \term{
      \mathcal{O}
      \big(
        \vec n^{\,\mathrm{el}},
        \vec n^{\,\mathrm{mag}}
      \big)
      }
    \end{tikzcd}
  $$
  the observable corresponding to these classes of smearing functions,
  the quantum (operator) product on them is
  \[
    \mathcal{O}\big( 
      \vec n^{\, \mathrm{el}}, 
      \vec n^{\, \mathrm{mag}} 
    \big)
    \cdot
    \mathcal{O}\big( 
    \vec m^{\mathrm{el}}, 
    \vec m^{\mathrm{mag}}
    \big)
       \;=\;
    \mathcal{O}\big( 
      \vec n^{\, \mathrm{el}}
      +
      \vec m^{\mathrm{el}}
      ,\,
      \vec n^{\, \mathrm{mag}}
      +
      \vec m^{\mathrm{mag}}
    \big)
    \,,
  \]
  
  \smallskip 
\noindent as befits observables in Weyl form (\cite[p. 571]{vonNeumann31}).

  Notice that (the exponentials of) the total flux observables $e^{\frac{\ImaginaryUnit}{\hbar}\int_{\mathbb{T}^2} E}$
  and $e^{\frac{\ImaginaryUnit}{\hbar}\int_{\mathbb{T}^2} F_A}$, whose smearing functions are constant,
  are summands of $\mathcal{O}\big( (0,0), (0,0) \big)$. From this, all other topological flux observables are obtained by acting with 
  the generators $\mathcal{O}\big( (\pm 1,0), (0,0) \big)$ and $\mathcal{O}\big( (0,\pm 1), (0,0) \big)$ etc., which observe the
  sectors of the first Fourier modes of fluxes through $\mathbb{T}^2$.
\end{example}

\begin{example}[Quantum observables on ``large fluxes'']
  \label{MoreGeneralStrictDeformationQuantizationOfMaxwellFluxes}
  In generalizing Ex. \ref{AStrictDeformationQuantizationOfMaxwellFluxes}, 
  notice that in quantizing we do have the freedom 
  of choosing in \eqref{LieGroupOfLinearYMFluxes} a non-connected Lie group $G$ with Lie algebra $\mathfrak{u}(1)$ -- such as 
  $G \,\defneq\, \mathrm{U}(1) \times \ZTwo$, even while retaining $\mathfrak{u}(1)/\mathbb{Z} \,\simeq\, \mathrm{U}(1)$ as the 
  gauge group of the magnetic fluxes, as usual. 
  If we think of the flux observables in  \eqref{LieAlgebraOfLinearYMFluxes} as $G$-gauge transformations --  indeed this is how they
  appear more manifestly below in \eqref{TopologicalKKFieldsExpressedOnSigma} -- then the non-connected components of $G$ correspond
  to what are known as ``large gauge transformations'' (e.g. \cite[p. 31]{HenneauxTeitelboim92}). In this vein here we may speak of ``large fluxes''.
    Now since $\ZTwo \simeq \mathrm{Aut}(\mathbb{Z})$ we may consider a non-trivial action of  $G$ on $\mathfrak{g}_0/\Lambda$ in 
  \eqref{LieGroupOfLinearYMFluxes}.
  This leads, via \eqref{GroupAlgebraOnCohomologyGroupReflectingTopologicalFLuxes}, to a quantum algebra of topological flux observables:
  \smallskip 
  \begin{equation}
    \mathbb{C}\big[
      \pi_0\,
      \mathrm{Map}\big(
        \Sigma
        ,\,
        (
        \underbrace{
        \CircleGroup \times \ZTwo
        }_{
          \scalebox{.7}{
            \color{gray}
            electric
          }
        }
        \ltimes
        \underbrace{\CircleGroup}_{
          \mathclap{
            \scalebox{.7}{
              \color{gray}
              magnetic
            }
          }
        }
        )
      \big)
    \big]
    \;\;
    \simeq
    \;\;
    \mathbb{C}\big[
      \underbrace{
      H^1(\Sigma;\, \mathbb{Z})
      \times
      H^0(\Sigma;\, \ZTwo)
      }_{
        \scalebox{.7}{
          \color{gray}
          \hspace{.7cm}
          electric
        }
      }
      \ltimes
      \underbrace{
      H^1\big(
        \Sigma
        ;\,
        \mathbb{Z}
      \big)
      }_{
        \scalebox{.7}{
          \color{gray}
          magnetic
        }
      }
    \big]
    \,,
  \end{equation}
  which is non-commutative, due to a non-trivial commutator between magnetic and large electric fluxes. 
  More generally, we could as well choose a semidirect product group $G \,\defneq\, \mathrm{U}(1) \rtimes \ZTwo$ for the electric fluxes, 
  in which case already the electric topological flux observables among themselves have non-trivial commutators,
  as is generally the case for non-abelian non-topological flux observables \eqref{LieAlgebraOfLinearYMFluxes}. 
    Yet more generally we may replace $\ZTwo$ by any discrete group equipped with a pair of homomorphisms $K \to \ZTwo$, 
    to obtain the following quantum algebras of topological fluxes:
    \smallskip 
  \begin{equation}
    \label{GeneralQuantizationOfTopologicalEMFluxes}
    \mathbb{C}\Big[
      \pi_0\,
      \mathrm{Map}\big(
        \Sigma
        ,\,
        (
        \underbrace{
        \CircleGroup \rtimes K
        }_{
          \scalebox{.7}{
            \color{gray}
            electric
          }
        }
        \ltimes
        \underbrace{\CircleGroup}_{
          \mathclap{
            \scalebox{.7}{
              \color{gray}
              magnetic
            }
          }
        }
        )
      \big)
    \Big]
    \;\;
    \simeq
    \;\;
    \mathbb{C}\Big[
      \underbrace{
      H^1(\Sigma;\, \mathbb{Z})
      \rtimes
      H^0(\Sigma;\, K)
      }_{
        \scalebox{.7}{
          \color{gray}
          \hspace{.7cm}
          electric
        }
      }
      \ltimes
      \underbrace{
      H^1\big(
        \Sigma
        ;\,
        \mathbb{Z}
      \big)
      }_{
        \scalebox{.7}{
          \color{gray}
          magnetic
        }
      }
    \Big]
    \,.
  \end{equation}
\end{example}

\medskip

In these examples, the choice of non-connected integrations of the gauge Lie algebra is, while certainly mathematically admissible,
unusual in traditional discussions of gauge theory. We next highlight, in \cref{FluxQuantizationInAbelianYM}, that such global 
choices of gauge group structure are part of the general phenomenon of flux quantization and as such have a clear
relevance that deserves attention.

\medskip




\medskip

\section{Flux quantization in abelian Yang-Mills}
\label{FluxQuantizationInAbelianYM}

The electromagnetic flux density (the Faraday tensor) $F \,\in\, \Omega^2_{\mathrm{dR}}(X^4)$ on a spacetime manifold $X^4$ (cf. \cite[Ex. 3.1]{SS23PhaseSpace}) can be thought of
 as a map $F :\ X \to \mathbf{\Omega}^2_{\closed}$ to the closed-differential form classifier in the topos of smooth sets 
 (cf. \cite[\S 1.2.3.2]{dcct}\cite[\S 2.3]{GiotopoulosSati23}\cite[p. 4]{Schreiber24}).
A fundamental and now classical insight into quantum gauge theory is the observation that this 
needs to be accompanied by a map $\chi \,:\, X \to \mathbf{B}^2 \mathbb{Z}$ to the classifying space for integral 2-cohomology and by 
a homotopy $\widehat{A}$ in the $\infty$-topos of smooth $\infty$-groupoids (cf. \cite[Prop. 1.24]{Char}\footnote{When we refer 
to equation-, definition-, proposition-, page-numbers in \cite{Char} we refer to the version published by World Scientific --- see \href{https://ncatlab.org/schreiber/show/The+Character+Map\#PublishedVersion}{\tt ncatlab.org/schreiber/show/The+Character+Map\#PublishedVersion} --- which 
differs from the numbering in the arXiv version (otherwise the content is the same).}, exposition in \cite{FSS14Stacky}\cite{Schreiber24}). 
This identifies the images of the two in real cohomology (see \cite[Ex. 9.4]{Char}):
\vspace{-2mm} 
\begin{equation}
  \label{OrdinaryDiracFluxQuantization}
  \begin{tikzcd}[row sep=5pt, column sep=huge]
    & 
    \mathbf{\Omega}^2_{\closed}
    \ar[
      dr,
      "{
        \scalebox{.7}{
          \color{gray}
          \bf
          de Rham map
        }
      }"{sloped}
    ]
    \ar[
      dd,
      Rightarrow,
      shorten=3pt,
      shift right=25pt,
      "{ \widehat{A} }",
      "{ \sim }"{swap, sloped}
    ]
    \\
    X^4
    \ar[
      ur, 
      "{ F }", 
      "{\ }"{swap, name=s}
    ]
    \ar[dr, "{ \rchi }"{swap}, "{\ }"{name=t}]
    &&
    \mathbf{B}^2 \mathbb{R}\;.
    \\
    &
    \underbrace{\mathbf{B}^2 \mathbb{Z}}_{
      B \CircleGroup
    }
    \ar[
      ur,
      "{
        \scalebox{.7}{
          \color{gray}
          \bf
          extension
        }
      }"{sloped},
      "{
        \scalebox{.7}{
          \color{gray}
          \bf
          of scalars
        }
      }"{sloped, swap}
    ]
  \end{tikzcd}
\end{equation}
This extra data exhibits {\it flux quantization} (often: ``Dirac charge quantization'', for further discussion and pointers 
see \cite[\S 2]{Alvarez85}\cite[\S 2]{Freed00}\cite[p. 4]{Char}\cite[\S 3.1]{SS23PhaseSpace}\cite{SS24FluxQuantization}) in that it ensures that the integrated magnetic flux through any 2-sphere submanifold 
$S^2 \hookrightarrow X$ is an integer
\medskip 
$
  \int_{S^2} F
  \;\in\;
  \mathbb{Z}
  \,,
$
counting the {\it number} of elementary magnetic solitons enclosed by $S^2$. 
For instance, if 
$$
  X 
  \;\defneq\;
  \mathbb{R}^{1,1}_+
  \wedge
  \mathbb{R}^2_{\cpt}
$$
is Minkowski spacetime with the ``point at infinity'' of a spatial hyperplane adjoined -- encoding the constraint that fields 
vanish at infinity along this plane, as is the case for a real laboratory magnetic field through a slab of material in the 
laboratory -- and if 
$$
  S^2 
    \;\simeq\; 
  \{0,0\} 
    \times 
  \mathbb{R}^2_\cpt \longhookrightarrow 
  X
$$
is the resulting sphere, then flux quantization reflects the experimentally observed phenomenon of integer numbers 
of Abrikosov vortices in a type-II superconducting material (cf. \cite[\S 2.1]{SS24FluxQuantization}).

\medskip

While this situation in Maxwell theory is commonly felt to be settled, it is not outright clear (and has hardly received consideration) which corresponding quantization 
condition is to be imposed on the {\it electric} flux density represented by the Hodge-dual 2-form $\star F$.
To even state the question properly, we need to get hold of $\star F$ as an independent flux variable that can be subjected to
flux quantization.

\medskip

\noindent
{\bf Premetric fluxes.}
This is accomplished by the equivalent ``pre-metric'' formulation of Maxwell's equations 
(\cite[\S 80]{Cartan24}, cf. \cite[Ex. 3.8]{Freed00}\cite{HehlItinObukhov16}\cite[Rem. 2.3]{BBSS17}\cite[Def. 1.16]{LazaroiuShahbazi22}\cite[Def. (3)]{LazaroiuShahbazi23}\cite[\S 3.1]{SS23PhaseSpace}\cite[\S 2.4]{SS24FluxQuantization}), which (in vacuum) subjects not one but two closed flux density variables
$(F,\, \ElectricFluxDensity) \,:\, X \to \mathbf{\Omega}^2_{\closed} \times \mathbf{\Omega}^2_{\closed}$ to a further {\it constitutive relation} 
enforcing their Hodge-duality on $X^4$:
\medskip 
\begin{equation}
  \label{HodgeDualityConditionOnEMFlux}
  \ElectricFluxDensity \,=\, \star_4 F
  \,.
\end{equation}
Trivially equivalent as this re-formulation is, it makes manifest the maximal ``decoupling'' of the {\it cohomological} from the {\it metric} content of Maxwell's equations, 
clearly suggesting that a choice of flux quantization of the pair $(F,\,\ElectricFluxDensity)$ needs to be made along the lines of \eqref{OrdinaryDiracFluxQuantization}, 
and only afterward the Hodge-duality condition \eqref{HodgeDualityConditionOnEMFlux} to be reimposed, in some fashion. 

Notice that this is just 
the same approach which in supergravity/string theory is known (\cite[\S 2.4]{SS24FluxQuantization}) as the ``duality-symmetric'' or ``democratic'' formulation (eg. \cite[p. 2]{MkrtchyanValach23}) 
underlying notably the common {\it Hypothesis K} that RR-fluxes are quantized in topological K-theory (see \cite[p.3]{GradySati22}\cite[\S 4.1]{SS24FluxQuantization} for pointers), and likewise for 
the fields in M-theory (see \cite[\S 4]{tcu}), here underlying the analogous ``Hypothesis H'' (\cite[\S 4.2]{SS24FluxQuantization}, see p. \pageref{HypothesisH} below). 

In fact, we may observe (we expand on this in the companion article \cite{SS23PhaseSpace}) that the pre-metric/democratic formulation of vacuum Maxwell
theory essentially coincides with its canonical phase space formulation (cf. \cref{BackgroundOnPhaseSpaceOfYangMillsFluxes}):

Namely on a globally hyperbolic spacetime $X^4 \,\simeq\, \mathbb{R}^{0,1} \times X^3$ in temporal gauge ($A_0 = 0$) the Faraday tensor in $\Omega^2(X^4)_{\closed}$ 
decomposes as a magnetic flux density $F \in \Omega^2(X^3)_\closed$ and a temporal component, whose Hodge dual is the electric 
flux density $E \,\in\, \Omega^2(X^3)_{\closed}$ which takes the role of the field's canonical momentum, and whose closure condition $\differential E = 0$ now plays the role of the {\it Gau{\ss} law}
constraint \eqref{GaussLaw}. This way, $F$ and $E$ are indeed independent field variables on $X^3$ and the constitutive relation \eqref{HodgeDualityConditionOnEMFlux}
is all absorbed into the prescription by which initial value data $(F,E)$ on $X^3$ induces temporal evolution in $X^4 = \mathbb{R}^{0,1} \times X^3$ (cf. \cite[\S 3.1]{SS23PhaseSpace}).

\medskip

\noindent
{\bf Phase space flux quantization.}
Hence we need to ask: What are the admissible flux quantization laws for flux densities  
$(F, \ElectricFluxDensity) : X^3 \to \mathbf{\Omega}^2_{\closed} \times \mathbf{\Omega}^2_{\closed}$?  The general answer is given in \cite{Char}: 
These are given by choices of topological spaces\footnote{
\label{OnRationalSpaces}
Here we restrict attention to classifying spaces $\mathcal{A}$  which are simply connected with finite-dimensional rational 
cohomology in each degree. This is not to get sidetracked by technical complications which, while of interest to the issue
of flux quantization, are more esoteric and beyond the intended scope of this note, cf. \cite[Rem. 5.1]{Char}.} $\mathcal{A}$ whose rationalization 
is equivalent to $B^2 \mathbb{Q} \times B^2 \mathbb{Q}$:
\begin{equation}
  \label{FluxQuantizationLawsForPremetricEM}
  \begin{tikzcd}
    \mathcal{A} 
    \ar[
      rr,
      "{
        \scalebox{.7}{
          \color{gray}
          \bf
          rationalization
        }
      }"{yshift=1pt}
    ]
    \ar[
      rrrr,
       rounded corners,
      to path={
           ([yshift=-00pt]\tikztostart.south)
        -- ([yshift=-14pt]\tikztostart.south)
        -- node[yshift=5pt]{
           \scalebox{.7}{
             \color{gray}
             \bf
             \scalebox{1}{$\RealNumbers$}-rationalization
           }
        }
           ([yshift=-12pt]\tikztotarget.south)
        -- ([yshift=-00pt]\tikztotarget.south)
      }
    ]
    &[+10pt]&[+10pt]
    B^2 \mathbb{Q}
    \times
    B^2 \mathbb{Q}
    \ar[
      rr,
      "{
        \scalebox{.7}{
          \color{gray}
          \bf
          extension
        }
      }",
      "{
        \scalebox{.7}{
          \color{gray}
          \bf
          of scalars
        }
      }"{swap}
    ]
    &&
    B^2 \mathbb{R}
    \times
    B^2 \mathbb{R}
    \mathrlap{\,.}
  \end{tikzcd}
\end{equation}

\smallskip 
\noindent For any such choice, the corresponding flux-quantized pre-metric gauge fields are given by homotopies of 
smooth $\infty$-groupoids of the following form (\cite[Def. 9.3]{Char}, but for our purpose here the reader need not 
further be concerned with the details of this construction):
\begin{equation}
  \label{GeneralPremetricDiracFluxQuantization}
  \begin{tikzcd}[
    row sep=7pt, column sep=huge
  ]
    & 
    \mathbf{\Omega}^2_{\closed}
    \times
    \mathbf{\Omega}^2_{\closed}
    \ar[
      dr,
      "{
        \scalebox{.6}{
          \color{gray}
          \bf
          de Rham map
        }
      }"{sloped}
    ]
    \ar[
      dd,
      shorten=10pt,
      shift right=40pt,
      Rightarrow,
      "{
        \; \widehat{A_{\mathrm{EM}}}      
      }",
      "{ \sim^{\phantom{i}} }"{swap, sloped}
    ]
    \\
    X^3
    \ar[
      ur, 
      "{ (F,\, \ElectricFluxDensity) }", 
      "{\ }"{swap, name=s}
    ]
    \ar[
      dr, 
      "{ 
        \rchi 
      }"{swap}, 
      "{\ }"{name=t}
    ]
    &&
    \mathbf{B}^2 \mathbb{R}
    \times
    \mathbf{B}^2 \mathbb{R}
    \\
    &
    \mathcal{A}
    \ar[
      ur,
      "{
        \scalebox{.7}{
          \color{gray}
          \bf
          \scalebox{1.2}{$\mathbb{R}$}-rationalization
        }
      }"{sloped, swap}
    ]
  \end{tikzcd}
\end{equation}

Notice that there are {\it many} available choices for such $\mathcal{A}$, and that each choice is a statement (a prediction)
about the corresponding physics. For instance, any connected topological space $Q$ all of whose homotopy groups are finite
(e.g. classifying spaces of finite groups)
\begin{equation}
  \label{RationalizationOfClassifyingSpaceOfFiniteGroup}
  K \,\in\,
  \mathrm{Grp}^{\mathrm{fin}}
  \hspace{.9cm}
  \yields
  \hspace{.9cm}
  \begin{tikzcd}[column sep=large]
    B K 
    \ar[
      rr,
      "{
        \scalebox{.7}{
          \color{gray}
          \bf
          rationalization
        }
      }"
    ]
    &&
    \ast
   \end{tikzcd}
\end{equation}
has trivial rationalization. This implies that with $A$ also any $A \times Q$ is an admissible flux quantization
law \eqref{FluxQuantizationLawsForPremetricEM} 
for pre-metric electromagnetism. 

\smallskip 
\noindent We make explicit some of the possible choices of electromagnetic flux quantization:

\begin{itemize}[leftmargin=.4cm]
\item
In view of \eqref{OrdinaryDiracFluxQuantization}, a suggestive choice may be the one that subjects $\star F$ to  same integral flux quantization as $F$
\begin{equation}
  \label{IntIntFluxQuantization}
  \mathcal{A} 
    \,\defneq\, 
  \underbrace{
    B \CircleGroup
  }_{
    \scalebox{.7}{
      \color{gray}
      magnetic
    }
  }
    \times 
  \underbrace{
    B \CircleGroup
  }_{
    \scalebox{.7}{
      \color{gray}
      electric
    }
  }
  \,,
\end{equation}
This corresponds to the choice made in \cite[(1.26)]{FMS07a}\cite[Rem. 2.3]{BBSS17}\cite[Def. 4.1]{LazaroiuShahbazi22}\cite[Def. 4.3]{LazaroiuShahbazi23} \footnote{
The articles \cite{LazaroiuShahbazi22}\cite{LazaroiuShahbazi23}, in the context of 4D supergravity, consider actually a yet more general form of flux quantization, where the the two factors in \eqref{IntIntFluxQuantization} may twist as one moves in the spacetime manifold.
}.
In \cite[(3.4)]{FMS07b} it says that it ``follows immediately'' from the Hodge self-duality of Maxwell's equations; but this is to presuppose
the answer to the question: Whether flux quantization laws retain all the symmetries of the underlying differential form data is a hypothesis that would ultimately need to be decided by experiment. Mathematically it is consistent, but so would be many other choices.

\item
Indeed, the mathematical physics literature commonly implies {\it no} further condition on the electric flux density $\star F$, apart from
it being the Hodge-dual of the magnetic flux density. In particular, common discussions of electromagnetism assume that the topological 
content of an EM-field configuration is all encoded in the class $\chi : X \to B^2 \mathbb{Z}$ of a single $\mathrm{U}(1)$-principal
bundle, nothing else.
This assumption is reflected in the choice
\begin{equation}
  \label{EMFluxQuantizationWithNoElectricCondition}
  \mathcal{A}
  \,\defneq\,
  \underbrace{
    B \CircleGroup
  }_{
    \scalebox{.7}{
      \color{gray}
      magnetic
    }
  }
  \times
  \underbrace{
    B^2 \mathbb{Q}
  }_{
    \scalebox{.7}{
      \color{gray}
      electric
    }
  }
  .
\end{equation}

\item
In either case, we highlight that the flux quantization laws \eqref{FluxQuantizationLawsForPremetricEM}
are subject to choices of pure torsion components in the classifying space. Even if we remain within the traditional assumption
that magnetic flux is classified exactly by $B \CircleGroup$ and that also electric flux should satisfy an integrality constraint as in \eqref{IntIntFluxQuantization}, there is still the freedom to postulate that electric flux is classified by a non-connected extension
of $\CircleGroup$, such as $\CircleGroup \rtimes K$ for any finite group $K$ with any action on $\CircleGroup$:
\begin{equation}
  \label{CrossedEMFluxQuantization}
  \mathcal{A}
  \;\simeq\;
  B\big(
    \underbrace{
      \CircleGroup
    }_{
      \mathclap{
        \scalebox{.7}{
          \color{gray}
          magnetic
        }
      }
    }
    \rtimes
  \underbrace{
    K
   \ltimes 
   \CircleGroup
  }_{
    \scalebox{.7}{
      \color{gray}
      electric
    }
  }
  \big)
  \,.
\end{equation}
(This freedom of choosing ``global'' non-abelian structure even in abelian Yang-Mills theory has also been observed, from a different angle, in \cite{LazaroiuShahbazi22}.)

We will see in \cref{PontrjaginAlgebrasOfQuantumObservables} that it is not entirely a coincidence that the group in \eqref{CrossedEMFluxQuantization} 
is the one controlling the topological flux observables in \eqref{GeneralQuantizationOfTopologicalEMFluxes}.

In realizing this coincidence, note that the duplication of fluxes to be independently flux-quantized in the ``duality symmetric'' formulation here in \S\ref{FluxQuantizationInAbelianYM} (namely magnetic and electric fluxes separately) matches the fact in \S\ref{QuantumObservablesOnYAngMillsFluxes} that {\it on phase space} the magnetic potential and electric field become independent of each other, one being the canonical momentum of the other. (This relation between duality-symmetric/pre-metric fluxes and passage to phase space is discussed in detail in \cite{SS23PhaseSpace}.)

\end{itemize}

\medskip

\noindent
{\bf Hypotheses about flux quantization laws.}
In listing the above examples, our aim is not to dwell on phenomenological questions of experimental quantum electromagnetism (though these are 
worthwhile, cf. \cite[p. 28]{FMS07b}\cite{KitaevMooreWalker07}), but rather to amplify the previously underappreciated mathematics 
parameterizing the space of consistent possibilities.
Namely, it is important to realize that analogous {\it choices} of flux quantization laws need to be made in higher gauge theories, 
and absent further rules any such choice is a {\it hypothesis} on the fundamental nature of these theories.

\medskip 
{\label{HypothesisK}}
Notably, concerning the popular statement that RR-flux forms in type I/II supergravity theory are quantized in topological K-theory 
(for pointers see \cite{GradySati22}\cite[\S 1.3]{SS23PhaseSpace}, we refer to this traditional hypothesis as ``Hypothesis K'' following \cite[Rem. 4.1]{SS23Defect}): There is secretly 
a choice that has been made and existing consistency checks of this choice rarely try to differentiate it from other possible choices.

\medskip 
\label{HypothesisH} Perhaps more importantly, a similar choice of flux-quantization law needs to be made when considering the $C$-field fluxes in $D=11$ 
supergravity. The similarly canonical-looking choice in this case is ``Hypothesis H'' (\cite[\S 2.5]{Sati13}\cite{FSS19TwistedCohomotopy}\cite{SS20Orientifold}\cite{FSS21HopfWZInHypothesisH}\cite{FSS21TwistedStringInHypothesisH}\cite{SS21M5Anomaly}\cite{FSS22TwistorialCohomotopy}\cite{SS23Mf}).
For this choice, we had observed \cite{SS22Configurations}\cite{CSS23WeightSystems} that the Pontrjagin algebra of the loop space 
of the moduli space of flux-quantized C-fields looks a lot like an algebra of non-perturbative topological quantum observables on (fluxes sourced by)
M-branes (see also \cref{ConclusionAndOutlook}).

\medskip 
Next, in \cref{PontrjaginAlgebrasOfQuantumObservables}, we observe that the analogous statement is already true for plain electromagnetism.

\medskip

Notice that another role of non-perturbative Rieffel quantization in relation to C-field fluxes has been considered in \cite{MathaiSati}.

\section{Pontrjagin algebras of Quantum observables}
\label{PontrjaginAlgebrasOfQuantumObservables}

In unification of \cref{QuantumObservablesOnYAngMillsFluxes} and \cref{FluxQuantizationInAbelianYM}, 
we observe here that algebras of topological flux observables \eqref{GroupAlgebraOnCohomologyGroupReflectingTopologicalFLuxes}
arise as the {\it homology  Pontrjagin algebras} of the loop spaces of moduli spaces of flux-quantized fields.

\medskip

\noindent
{\bf Topological fields and KK-Reduction.}
We consider  globally hyperbolic spacetimes which, besides the temporal direction, have a line factor singled out, i.e.,
are product spaces of the form
\begin{equation}
  \label{}
  \underbrace{
    \mathbb{R}^{0,1}
    \mathclap{\phantom{\big)}}
  }_{
    \scalebox{.7}{time}
  }
  \times
  \underbrace{
  \big(
    \mathbb{R}^1
    \times
    \Sigma
  \big)
  }_{
    \mathclap{
      \scalebox{.7}{
        Cauchy surface
      }
    }
  }
  \;\;=\;\;
  \mathbb{R}^{1,1}
  \times
  \Sigma
  \,.
\end{equation}
Here we may regard (by \eqref{TopologicalKKFieldsExpressedOnSigma} below) the $\mathbb{R}^1$-factor as a  ``decompactified'' KK-compactification
fiber (i.e., the fully non-perturbative situation, as KK-theory goes). Indeed, a ``spontaneous'' compactification is automatically implied if 
we consider topological fields vanishing at spatial infinity \eqref{ClasifyingMapsVanishingAtInfinity}, as usual \eqref{YMInstantonSectors} 
for solitonic fields, because:
\begin{equation}
  \label{TopologicalKKCompactification}
  \def\arraystretch{1.6}
  \begin{array}{ll}
    \mathrm{Map}^{\ast/}\Big(
      \mathbb{R}^{0,1}_{\plus}
      \wedge
      \big(
        \mathbb{R}^1
        \times
        \Sigma
      \big)_{\compact}
      ,\,
      B ( G \ltimes B \Lambda )
    \Big)
    &
    \proofstep{
      \def\arraystretch{.9}
      \def\tabcolsep{-6pt}
      \begin{tabular}{l}
      moduli space of topological fields
      \\
      vanishing at spatial infinity
      \end{tabular}
    }
    \\
    \underset{\mathrm{whe}}{
    \;\simeq\;
    }
    \mathrm{Map}^{\ast/}\Big(
      \big(
        \mathbb{R}^1
        \times
        \Sigma
      \big)_{\compact}
      ,\,
      B ( G \ltimes B \Lambda )
    \Big)
    &
    \proofstep{
      by
      \eqref{SmashProductWithContractibleManifold}
    }
    \\
    \;\simeq\;
    \mathrm{Map}^{\ast/}\Big(
      \big(
        S^1
        \wedge
       \Sigma_{\plus}
      ,\,
      B ( G \ltimes B \Lambda ) 
    \Big)
    &
    \proofstep{
      by
      \eqref{CompactificationOfProductOfCompactSpaceWithLine}.
    }
  \end{array}
\end{equation}
(Here we are assuming, just for brevity for exposition, that $\Sigma$ is already compact itself.)

\smallskip
Elementary as this is mathematically, it is somewhat remarkable as it exhibits the moduli space of topological fields as a loop space:
\begin{equation}
  \label{KKFieldsModuliSpaceAsLoopSpace}
  \def\arraystretch{1.6}
  \begin{array}{ll}
    \mathrm{Map}^{\ast/}\Big(
      \mathbb{R}^{0,1}_{\plus}
      \wedge
      \big(
        \mathbb{R}^1
        \times
        \Sigma
      \big)_{\compact}
      ,\,
      B (G \ltimes B \Lambda)
    \Big)
    &
    \proofstep{
      \def\arraystretch{.9}
      \def\tabcolsep{-6pt}
      \begin{tabular}{l}
      moduli space of topological fields
      \\
      vanishing at spatial infinity
      \end{tabular}
    }
    \\
    \underset{
      \mathrm{whe}
    }{
    \;\simeq\;
    }
    \mathrm{Map}^{\ast/}\Big(
      \big(
        S^1
        \wedge
       \Sigma_{\plus}
      ,\,
      B (G \ltimes B \Lambda)
    \Big)
    &
    \proofstep{
      by \eqref{TopologicalKKCompactification}
    }
    \\
    \;\simeq\;
    \mathrm{Map}^{\ast/}\Big(
      S^1
      ,\,
      \mathrm{Map}^{\ast/}
      \big(
        \Sigma_\plus
        ,\,
        B (G \ltimes B \Lambda)
      \big)
    \Big)
    &
    \proofstep{
      by \eqref{InternalHomIsoForPointedSpaces}
    }
    \\
    \;\simeq\;
    \Omega
    \,
    \mathrm{Map}\big(
      \Sigma
      ,\,
      B (G \ltimes B \Lambda)
    \big)
    &
    \proofstep{
      by
      \eqref{BasedLoopSpace}
      \& 
      \eqref{DisjointBasePointIsLeftAdjoint}.
    }
\end{array}
\end{equation}
But this also means that:
\begin{equation}
  \label{TopologicalKKFieldsExpressedOnSigma}
  \def\arraystretch{1.6}
  \begin{array}{ll}
    \mathrm{Map}^{\ast/}\Big(
      \mathbb{R}^{0,1}_{\plus}
      \wedge
      \big(
        \mathbb{R}^1
        \times
        \Sigma
      \big)_{\compact}
      ,\,
      B (G \ltimes B \Lambda)
    \Big)
    &
    \proofstep{
      \def\arraystretch{.9}
      \def\tabcolsep{-6pt}
      \begin{tabular}{l}
      moduli space of topological fields
      \\
      vanishing at spatial infinity
      \end{tabular}
    }
    \\
    \;\simeq\;
    \Omega
    \,
    \mathrm{Map}\big(
      \Sigma
      ,\,
      B ( G \ltimes B \Lambda)
    \big)
    &
    \proofstep{
      by
      \eqref{KKFieldsModuliSpaceAsLoopSpace}
    }
  \\
  \;\simeq\;
  \mathrm{Map}\big(
    \Sigma
    ,\, 
    (G \ltimes B \Lambda)
  \big)
  &
  \proofstep{
    by
    \eqref{LoopSpaceOfMappingIntoBG}.
  }
  \end{array}
\end{equation}
\vspace{-.3cm}

\noindent

\vspace{.2cm}
\hspace{-1cm}
\begin{tabular}{ll}
\begin{tabular}{l}
\begin{minipage}{5.5cm}
Hereby the topological fields on $\mathbb{R}^1 \times \Sigma$ are re-expressed as
fields on $\Sigma$, as befits a KK-reduction. (This is the based version of double 
dimensional reduction via {\it free} looping \cite[\S 2.2]{BMSS19}).
\end{minipage}
\\
{}
\\
\begin{minipage}{8cm}
{\bf Fusion of topological KK-fields.}
Unwinding the definitions, one sees that the operation of loop concatenation in \eqref{KKFieldsModuliSpaceAsLoopSpace} corresponds
to the ``fusion'' of field solitons in the KK-direction.

For abelian fields, this fusion of solitons is reflected in the addition of their charges.
\end{minipage}
\end{tabular}
&
\hspace{-3.3cm}
\adjustbox{
  scale=.93,
  raise=-3.3cm
}{
\begin{tikzpicture}

\begin{scope}[]
\draw[
  draw opacity=0,
  fill=lightgray
]
  (-2.8,1.5)
  rectangle (+2.8,-1.5);

\draw[
  white,
  line width=2.8pt,
]
 (-2.65, 1.5) to (-2.65,-1.5);
\draw[
  white,
  line width=2.8pt,
]
 (-2.45, 1.5) to (-2.45,-1.5);
\draw[
  white,
  line width=2.8pt,
]
 (-2.25, 1.5) to (-2.25,-1.5);
\draw[
  white,
  line width=2.8pt,
]
 (+2.65, 1.5) to (+2.65,-1.5);
\draw[
  white,
  line width=2.8pt,
]
 (+2.45, 1.5) to (+2.45,-1.5);
\draw[
  white,
  line width=2.8pt,
]
 (+2.25, 1.5) to (+2.25,-1.5);
\draw[
  dashed,
  latex-latex,
  line width=.2
]
  (-2.2,1.5) to (-2.2,-1.5);
\node at (-2.08,0) {
  \scalebox{.6}{
    $\Sigma$
  }
};
\draw[
  dashed,
  latex-latex,
  line width=.2
]
 (-2.6, -1.5) to 
 (+2.6, -1.5);
\node
  at (-3, -1.5) 
  {
    \scalebox{.5}{$\infty$}
  };
\node
  at (+3, -1.5) 
  {
    \scalebox{.5}{$\infty$}
  };
\node
  at (0,-1.65){
    \scalebox{.5}{$\mathbb{R}^1$}
  };
\begin{scope}[
  shift={(-1,.6)}
]
\shadedraw[
  draw opacity=0,
  inner color=gray,
  outer color=lightgray
]
  (0,0) circle (.6);
\shadedraw[
  draw opacity=0,
  inner color=black,
  outer color=lightgray!210
]
  (0,0) circle (.45);
\end{scope}
\begin{scope}[
  shift={(.9,.1)}
]
\shadedraw[
  draw opacity=0,
  inner color=gray,
  outer color=lightgray
]
  (0,0) circle (.6);
\shadedraw[
  draw opacity=0,
  inner color=black,
  outer color=lightgray!210
]
  (0,0) circle (.45);
\end{scope}
\begin{scope}[
  shift={(-.2,-.7)}
]
\shadedraw[
  draw opacity=0,
  inner color=gray,
  outer color=lightgray
]
  (0,0) circle (.6);
\shadedraw[
  draw opacity=0,
  inner color=black,
  outer color=lightgray!210
]
  (0,0) circle (.45);
\end{scope}
\end{scope}


\begin{scope}[shift={(6.0,0)}]
\draw[
  draw opacity=0,
  fill=lightgray
]
  (-2.8,1.5)
  rectangle (+2.8,-1.5);

\draw[
  white,
  line width=2.8pt,
]
 (-2.65, 1.5) to (-2.65,-1.5);
\draw[
  white,
  line width=2.8pt,
]
 (-2.45, 1.5) to (-2.45,-1.5);
\draw[
  white,
  line width=2.8pt,
]
 (-2.25, 1.5) to (-2.25,-1.5);
\draw[
  white,
  line width=2.8pt,
]
 (+2.65, 1.5) to (+2.65,-1.5);
\draw[
  white,
  line width=2.8pt,
]
 (+2.45, 1.5) to (+2.45,-1.5);
\draw[
  white,
  line width=2.8pt,
]
 (+2.25, 1.5) to (+2.25,-1.5);
\draw[
  dashed,
  latex-latex,
  line width=.2
]
  (-2.2,1.5) to (-2.2,-1.5);
\node at (-2.08,0) {
  \scalebox{.6}{
    $\Sigma$
  }
};
\draw[
  dashed,
  latex-latex,
  line width=.2
]
 (-2.6, -1.5) to 
 (+2.6, -1.5);
\node
  at (-3, -1.5) 
  {
    \scalebox{.5}{$\infty$}
  };
\node
  at (+3, -1.5) 
  {
    \scalebox{.5}{$\infty$}
  };
\node
  at (0,-1.65){
    \scalebox{.5}{$\mathbb{R}^1$}
  };
\begin{scope}[
  shift={(-.7,-.2)}
]
\shadedraw[
  draw opacity=0,
  inner color=gray,
  outer color=lightgray
]
  (0,0) circle (.6);
\shadedraw[
  draw opacity=0,
  inner color=black,
  outer color=lightgray!210
]
  (0,0) circle (.45);
\end{scope}
\begin{scope}[
  shift={(1,+.2)}
]
\shadedraw[
  draw opacity=0,
  inner color=gray,
  outer color=lightgray
]
  (0,0) circle (.6);
\shadedraw[
  draw opacity=0,
  inner color=black,
  outer color=lightgray!210
]
  (0,0) circle (.45);
\end{scope}
\end{scope}


\begin{scope}[
  shift={(3.2,-3.6)}
]
\draw[
  draw opacity=0,
  fill=lightgray
]
  (-2.8,1.5)
  rectangle (+2.8,-1.5);

\draw[
  white,
  line width=2.8pt,
]
 (-2.65, 1.5) to (-2.65,-1.5);
\draw[
  white,
  line width=2.8pt,
]
 (-2.45, 1.5) to (-2.45,-1.5);
\draw[
  white,
  line width=2.8pt,
]
 (-2.25, 1.5) to (-2.25,-1.5);
\draw[
  white,
  line width=2.8pt,
]
 (+2.65, 1.5) to (+2.65,-1.5);
\draw[
  white,
  line width=2.8pt,
]
 (+2.45, 1.5) to (+2.45,-1.5);
\draw[
  white,
  line width=2.8pt,
]
 (+2.25, 1.5) to (+2.25,-1.5);
\draw[
  dashed,
  latex-latex,
  line width=.2
]
  (-2.2,1.5) to (-2.2,-1.5);
\node at (-2.08,0) {
  \scalebox{.6}{
    $\Sigma$
  }
};
\draw[
  dashed,
  latex-latex,
  line width=.2
]
 (-2.6, -1.5) to 
 (+2.6, -1.5);
\node
  at (-3, -1.5) 
  {
    \scalebox{.5}{$\infty$}
  };
\node
  at (+3, -1.5) 
  {
    \scalebox{.5}{$\infty$}
  };
\node
  at (0,-1.5){
    \scalebox{.5}{\colorbox{white}{$\mathbb{R}^1$}}
  };
\begin{scope}[
  xscale=.5,
  shift={(-1.9,0)}
]
\begin{scope}[
  shift={(-1,.6)}
]
\shadedraw[
  draw opacity=0,
  inner color=gray,
  outer color=lightgray
]
  (0,0) circle (.6);
\shadedraw[
  draw opacity=0,
  inner color=black,
  outer color=lightgray!210
]
  (0,0) circle (.45);
\end{scope}
\begin{scope}[
  shift={(.9,.1)}
]
\shadedraw[
  draw opacity=0,
  inner color=gray,
  outer color=lightgray
]
  (0,0) circle (.6);
\shadedraw[
  draw opacity=0,
  inner color=black,
  outer color=lightgray!210
]
  (0,0) circle (.45);
\end{scope}
\begin{scope}[
  shift={(-.2,-.7)}
]
\shadedraw[
  draw opacity=0,
  inner color=gray,
  outer color=lightgray
]
  (0,0) circle (.6);
\shadedraw[
  draw opacity=0,
  inner color=black,
  outer color=lightgray!210
]
  (0,0) circle (.45);
\end{scope}
\end{scope}

\begin{scope}[
  xscale={.5},
  shift={(1.9,0)}
]
\begin{scope}[
  shift={(-.7,-.2)}
]
\shadedraw[
  draw opacity=0,
  inner color=gray,
  outer color=lightgray
]
  (0,0) circle (.6);
\shadedraw[
  draw opacity=0,
  inner color=black,
  outer color=lightgray!210
]
  (0,0) circle (.45);
\end{scope}
\begin{scope}[
  shift={(1,+.2)}
]
\shadedraw[
  draw opacity=0,
  inner color=gray,
  outer color=lightgray
]
  (0,0) circle (.6);
\shadedraw[
  draw opacity=0,
  inner color=black,
  outer color=lightgray!210
]
  (0,0) circle (.45);
\end{scope}
\end{scope}

\end{scope}

\draw[
  |->
]
  (3,-1.7) to 
  (3,-2);

\end{tikzpicture}
}
\end{tabular}

\noindent
This illustrates how the Pontrjagin product \eqref{HomologyPontrjaginAlgebraonLoopSpaceOfFields} is the pushforward in homology along concatenation of loops (of flux configurations): Thereby a pair of holomology classes which are (discrete) Dirac delta-distributions supported on loops corresponding to flux configurations as show in the top panel are sent to the Dirac delta distribution supported on the corresponding concatenated and then rescaled loop, as indicated in the bottom panel.

\medskip

\noindent
{\bf Pontrjagin algebra as Quantum observables.}
To see how to obtain quantum observables of such moduli, let us step back for a moment and reconsider the notion of quantum
observables in the most simplistic non-trivial case, namely on a set $S \,\in\, \Sets$ of disconnected (``superselection'') 
sectors of a physical system. The algebra of such quantum observables is just the linear span $\mathbb{C}[S]$ 
(whose canonical basis elements are the observations: ``system is in sector $s \in S$'') equipped with the $S$-element wise 
product of complex numbers, and regarded as a star-algebra under $S$-element wise complex conjugation.

\smallskip 
Now regard this simplistic case from a more sophisticated perspective by regarding $S \,\in\, \mathrm{Set} \hookrightarrow \mathrm{kTopSp}$ 
as a topological space that happens to carry a discrete topology. Then we may equivalently say that the observables span the {\it homology} 
$H_\bullet(S;\, \mathbb{C}) \,\simeq\, \mathbb{C}[S]$ of the space of sectors of configurations of the physical system. This is noteworthy, 
because the notion of homology makes sense, of course, for general topological spaces.
The main point is that, while homology groups do not generally form a natural algebra structure -- much less a star-algebra structure 
as required on quantum observables (e.g. \cite[\S 6]{BogolyubovEtAl90}) --  they do so on {\it loop spaces} \eqref{BasedLoopSpace} (generally on ``H-spaces''):

\smallskip 
To that end, let $\TopologicalFields$ be a topological moduli space of topological field configurations of a given physical system, equipped with 
some basepoint. Then the homology of its loop space becomes a star-algebra (in fact a Hopf algebra) whose product is induced by the 
concatenation of loops (``Pontrjagin product'' \cite{Pontrjagin39}\cite{BottSamelson53}, cf. \cite[pp. 287]{Hatcher02})
and whose star-involution (``dagger''-operation) is induced by reversal of loops:
\begin{equation}  \label{HomologyPontrjaginAlgebraonLoopSpaceOfFields}
  \def\arraystretch{1}
  \begin{array}{c}
 \begin{tikzcd}[
   row sep=4pt,
   column sep=40pt
 ]
     \Omega 
     \,
     \TopologicalFields
   \;\times\;
     \Omega 
     \,
     \TopologicalFields
   \ar[
     rr,
     "{
       \mathrm{conc}
     }",
     "{
       \scalebox{.7}{
         \color{gray}
         concatenate loops
       }
     }"{swap}
   ]
   &&
     \Omega 
     \,
     \TopologicalFields
   \\
   H_\bullet\big(
     \Omega 
     \,
     \TopologicalFields
     ;\,
     \mathbb{C}
   \big)
   \otimes
   H_\bullet\big(
     \Omega 
     \,
     \TopologicalFields
     ;\,
     \mathbb{C}
   \big)
   \ar[
     rr,
     "{ 
       (-) \cdot (-)
       \;\;
       :=
       \;\;
       \mathrm{conc}_\ast 
     }",
     "{
       \scalebox{.7}{
         \color{gray}
         Pontrjagin product
       }
     }"{swap}
   ]
   &&
   H_\bullet\big(
     \Omega 
     \,
     \TopologicalFields
     ;\,
     \mathbb{C}
   \big)  
   \mathrlap{\,,}
 \end{tikzcd}  
  \\
  \\  
  \begin{tikzcd}[
    row sep=4pt,
    column sep=55pt
  ]
    \Omega 
    \,
    \TopologicalFields
    \ar[
      rr,
      "{ \mathrm{rev} }",
      "{
        \scalebox{.7}{
          \color{gray}
          reverse loops
        }
      }"{swap}
    ]
    &&
    \Omega
    \,
    \TopologicalFields
    \\
    H_\bullet\big(
      \Omega
      \,
      \TopologicalFields
      ;\,
      \mathbb{C}
    \big)
    \ar[
      drr,
      end anchor={
        [xshift=-10pt, yshift=-10pt]
      },
      "{
        (-)^\dagger
      }"{sloped},
      "{
        \scalebox{.7}{
          \color{gray}
          Pontrjagin dagger
        }
      }"{swap, sloped}
    ]
    \ar[
      rr,
      "{
        \mathrm{rev}_\ast
      }",
      "{
        \scalebox{.7}{
          \color{gray}
          Pontrjagin antipode
        }
      }"{swap}
    ]
    &&
    H_\bullet\big(
      \Omega
      \,
      \TopologicalFields
      ;\,
      \mathbb{C}
    \big)
    \ar[
      d,
      "{
        \overline{(-)}
      }",
      "{
        \scalebox{.7}{
          \color{gray}
          \def\arraystretch{.9}
          \begin{tabular}{c}
            complex
            \\
            conjugation
          \end{tabular}
        }
      }"{swap}
    ]
    \\[25pt]
    &&
    H_\bullet\big(
      \Omega
      \,
      \TopologicalFields
      ;\,
      \mathbb{C}
    \big)
    \mathrlap{\,.}
  \end{tikzcd}
  \end{array}
\end{equation}

\medskip 

In these terms, we now obtain the following main observation of this note:

\begin{theorem}[Pontrjagin ring of topological quantum observables on Maxwell fluxes]
 \label{PontrjaginRingOfMaxwellFluxObservables}
 The Pontrjagin Hopf-algebra \eqref{HomologyPontrjaginAlgebraonLoopSpaceOfFields}
 of the moduli space of topological 
 $\mathfrak{u}(1)$-gauge fields \eqref{KKFieldsModuliSpaceAsLoopSpace} subject to the flux-quantization law
 \eqref{CrossedEMFluxQuantization} is, in degree=0, isomorphic to the Hopf algebra of topological quantum observables 
 from Ex. \ref{MoreGeneralStrictDeformationQuantizationOfMaxwellFluxes} on fluxes in  $\mathfrak{u}(1)$-Yang-Mills:
 $$
   H_0\bigg(
     \mathrm{Map}^{\ast/}
     \Big(
       \mathbb{R}^{0,1}_\plus
       \wedge
       \big(
         \mathbb{R}^1
         \times
         \Sigma
       \big)_{\compact}
       ,\,
       B\big(
         \CircleGroup^2
         \rtimes 
         K
       \big)
     \Big)
       ;\,
       \mathbb{C}
   \bigg)
   \;\simeq\;
   \mathbb{C}\big[
     H^1(\Sigma;\,\mathbb{Z})^2
     \rtimes
     H^0(\Sigma;\, K)
   \big]
   \,.
 $$
\end{theorem}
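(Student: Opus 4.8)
The plan is to reduce everything to two facts already in hand --- the identification \eqref{TopologicalKKFieldsExpressedOnSigma} of the moduli space with a mapping space into a topological group, and the computation \eqref{DiscreteGroupOfTopologicalYMFluxes} of its set of connected components --- the only genuinely new input being the elementary observation that degree-zero Pontrjagin homology of a based loop space is the group algebra of the corresponding fundamental group. Concretely, I would first take $\mathcal{A} \defneq B(\CircleGroup^2 \rtimes K)$, so that $\Omega \mathcal{A} \simeq \CircleGroup^2 \rtimes K \simeq G \ltimes B\Lambda$ with $G \defneq \CircleGroup \rtimes K$ (electric) and $\Lambda \defneq \mathbb{Z}$, $B\Lambda \simeq \CircleGroup$ (magnetic); this is the topological group underlying the crossed flux-quantization law \eqref{CrossedEMFluxQuantization}. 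The chain \eqref{TopologicalKKFieldsExpressedOnSigma}--\eqref{KKFieldsModuliSpaceAsLoopSpace} then rewrites the moduli space in the theorem as the based loop space $\Omega\, \mathrm{Map}(\Sigma, \mathcal{A})$, equivalently as $\mathrm{Map}(\Sigma, \CircleGroup^2 \rtimes K)$.

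Next I would invoke the general fact that for any based space $W$ one has $H_0(\Omega W;\, \mathbb{C}) \simeq \mathbb{C}\big[\pi_0(\Omega W)\big] = \mathbb{C}[\pi_1 W]$ as $\mathbb{C}$-vector spaces, and that under the Pontrjagin structure \eqref{HomologyPontrjaginAlgebraonLoopSpaceOfFields} this upgrades to an isomorphism of Hopf-/star-algebras onto the canonical group Hopf-algebra on $\pi_1 W$: loop concatenation induces on $\pi_0$ precisely the group multiplication of $\pi_1$ (so that $\mathrm{conc}_\ast$ is the convolution product), loop reversal induces inversion (so that both the antipode $\mathrm{rev}_\ast$ and, since the canonical basis is real, the dagger act as $g \mapsto g^{-1}$), and the diagonal induces the group-like coproduct $g \mapsto g \otimes g$. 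This already identifies the degree-zero Pontrjagin Hopf-algebra with $\mathbb{C}[\pi_1 \mathrm{Map}(\Sigma, \mathcal{A})]$ equipped with its standard group-algebra convolution product.

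It then remains to compute the fundamental group and match it with the example. By \eqref{DiscreteGroupOfTopologicalYMFluxes},
$$
  \pi_1 \mathrm{Map}(\Sigma, \mathcal{A})
  \,=\,
  \pi_0 \mathrm{Map}(\Sigma,\, G \ltimes B\Lambda)
  \,\simeq\,
  H^0(\Sigma;\, \CircleGroup \rtimes K) \ltimes H^1(\Sigma;\, \mathbb{Z}),
$$
and, using $H^0(\Sigma;\, \CircleGroup) \simeq H^1(\Sigma;\, \mathbb{Z})$ from \eqref{ZerothCircleCohomology} for the electric factor, this is $\big(H^1(\Sigma;\mathbb{Z}) \rtimes H^0(\Sigma; K)\big) \ltimes H^1(\Sigma;\mathbb{Z})$ --- precisely the group underlying the convolution algebra \eqref{GeneralQuantizationOfTopologicalEMFluxes} of Ex.~\ref{MoreGeneralStrictDeformationQuantizationOfMaxwellFluxes}, whose product is likewise the group-algebra convolution by Thm.~\ref{TopologicalQuantumObservablesOnYMFluxes}. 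Regrouping this iterated semidirect product into $H^1(\Sigma;\mathbb{Z})^2 \rtimes H^0(\Sigma; K)$ yields the displayed isomorphism.

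The main (and essentially only nontrivial) obstacle is the semidirect-product bookkeeping: one must check that the two presentations --- the flux-quantization group $\CircleGroup \rtimes (K \ltimes \CircleGroup)$ of \eqref{CrossedEMFluxQuantization} and the observable group $(\CircleGroup \rtimes K) \ltimes \CircleGroup$ of \eqref{GeneralQuantizationOfTopologicalEMFluxes} --- are the same topological group $\CircleGroup^2 \rtimes K$, and that, after passing to cohomology via \eqref{DiscreteGroupOfTopologicalYMFluxes}, the induced $K$-action on the two copies of $H^1(\Sigma;\mathbb{Z})$ (through the chosen pair of homomorphisms $K \to \ZTwo \simeq \mathrm{Aut}(\mathbb{Z})$) matches the action defining $H^1(\Sigma;\mathbb{Z})^2 \rtimes H^0(\Sigma;K)$. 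This reduces to the standard fact that when a group acts on two mutually commuting abelian factors the two-step semidirect product reorganizes as a single semidirect product; the commutativity of the two $\CircleGroup$-factors is exactly what makes the regrouping valid. One should also record --- though this is routine --- the compatibility of the Pontrjagin coproduct and antipode with the group-Hopf structure, so that the conclusion is an isomorphism of Hopf-/star-algebras and not merely of associative algebras.
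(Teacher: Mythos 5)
Your proof is correct, and its endpoints coincide with the paper's: the same KK-reduction chain \eqref{KKFieldsModuliSpaceAsLoopSpace}, \eqref{TopologicalKKFieldsExpressedOnSigma} identifying the moduli space with $\mathrm{Map}(\Sigma,\,\CircleGroup^2\rtimes K)$, and the same conclusion that the degree-zero Pontrjagin algebra is the convolution algebra of the group of connected components. But your middle step is genuinely different. The paper never passes through $\pi_1$ directly: it computes the full weak homotopy type of the mapping space via the Lemma establishing \eqref{MappingSpaceIntoCircle} (which requires showing that all higher homotopy groups vanish and that all connected components of a topological group are homeomorphic), and then truncates to degree zero using the K\"unneth theorem \eqref{KunnethTheorem} through \eqref{ZeroHomologyOfGammaTimesConnected} and \eqref{HomologyOfDiscreteGroup}. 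You shortcut this with the elementary fact that $H_0(\Omega W;\,\mathbb{C})\simeq\mathbb{C}\big[\pi_1 W\big]$ as Hopf star-algebras for any based space $W$, so that only $\pi_0\,\mathrm{Map}(\Sigma,\,\CircleGroup^2\rtimes K)$ is needed --- and that is precisely \eqref{DiscreteGroupOfTopologicalYMFluxes} from \cref{QuantumObservablesOnYAngMillsFluxes} together with \eqref{ZerothCircleCohomology}, which moreover hands you on the nose the group underlying Ex.~\ref{MoreGeneralStrictDeformationQuantizationOfMaxwellFluxes}. What your route buys: no Lemma, no K\"unneth, and an explicit matching of the Hopf-/star-structure (group-like coproduct, antipode and dagger acting as inversion), which the paper's displayed chain of isomorphisms leaves implicit. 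What the paper's longer route buys: the full homotopy type of the mapping space, which is what one needs if one wants the Pontrjagin algebra in higher degrees (relevant to the outlook in \cref{ConclusionAndOutlook}), not just in degree zero. Your closing point on the semidirect-product bookkeeping --- that $\CircleGroup\rtimes(K\ltimes\CircleGroup)$ of \eqref{CrossedEMFluxQuantization}, $(\CircleGroup\rtimes K)\ltimes\CircleGroup$ of \eqref{GeneralQuantizationOfTopologicalEMFluxes}, and $\CircleGroup^2\rtimes K$ all coincide because the connected electric factor acts trivially on the magnetic one --- is also correct, and is indeed the one place where genuine care is required in matching the theorem's statement to the example.
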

\begin{proof}
We discuss this in \cref{HomotopyTheoryOfTopologicalFieldSectors}.
\end{proof}

This observation seems noteworthy in that it obtains the topological quantum flux observables directly from the topology of the flux quantization law, 
short-cutting the analysis of Poisson brackets, and as such immediately generalizes to higher non-abelian gauge theories subject to rather more subtle
flux quantization laws, such as the RR-fluxes in string theory and the C-field fluxes in M-theory mentioned on p. \pageref{HypothesisK}. 
We comment on the potential impact in the following outlook \cref{ConclusionAndOutlook}.

\section{Conclusion and Outlook}
\label{ConclusionAndOutlook}

\noindent

Above we have focused on Pontrjagin algebras (of loop spaces of moduli spaces of topological gauge fields) in degree $\!\!=\!\!0$, 
showing that these reproduce non-perturbative topological quantum observables on fluxes. It is interesting to notice 
that also in higher degrees these Pontrjagin algebras look like (higher) algebras of (higher) quantum observables:

\smallskip

\noindent
{\bf Pontrjagin algebras as higher quantum algebras.}
The rational homotopy type of a simply connected topological space $\mathcal{A}$ (as in footnote \ref{OnRationalSpaces}) 
is all encoded in its rational Whitehead $L_\infty$-algebra $\mathfrak{l}{\mathcal{A}} \,\in\, L_\infty\mathrm{Alg}$ (dually its Sullivan model, see \cite[Prop. 5.11]{Char}), and the Pontrjagin algebra of its loop space 
is \cite[p. 262]{MilnorMoore65}\cite[Thm. 16.13]{FHT00} the universal graded enveloping algebra $U(-)$ 
of the underlying binary Whitehead Lie algebra bracket $[-,-]$ (a graded super-Lie algebra, but this generalizes to the enveloping $A_\infty$-algebra of the full Whitehead $L_\infty$-algebra \cite[Thm. 4.1]{Moreno-Fernandez22}):
\[
  \mbox{
    $\mathcal{A}$ 
    simply connected
  }
  \hspace{.7cm}
    \Rightarrow
  \hspace{.7cm}
  H_\bullet\big(
    \Omega
    \,
    \mathcal{A}
    ;\,
    \mathbb{R}
  \big)
  \;\simeq\;
  U\big(
    \mathfrak{l}\mathcal{A},
    [-,-]
  \big) 
  \,.
\]
Now, of course, the passage from Lie algebras to universal enveloping algebras is again the (now ``formal'') 
deformation quantization of the corresponding Lie-Poisson structure \cite[(.42)]{Gutt83}\cite[\S 22]{Gutt11}\cite[p. 3]{PenkavaVanhaecke00}, 
now all in a higher-geometric sense reflected in the grading.

\smallskip

For example, if $\mathcal{A} \,\defneq\, S^4$ is the homotopy type of the 4-sphere, then $\mathfrak{l}S^4$ has generators $v_3$ in
degree 3 and $v_6$ in degree-6, with non-trivial super-Lie bracket being the {\it M-theory gauge Lie algebra} $[v_3, v_3] = v_6$
(\cite[(2.5)]{CJLP98}\cite[\S 4]{tcu}\cite[Ex. 2.2]{SatiVoronov23}\cite[(27)]{SS24FluxQuantization}), whose graded universal enveloping algebra is hence
\vspace{-.3cm}
\begin{equation}
  H_\bullet\big(
    \Omega \,S^4
    ;\,
    \mathbb{C}
  \big)
  \;\simeq\;
  \mathbb{C}\big[v_3, v_6\big]\big/
  \big(
    v_3^2 - v_6
  \big)
  \,.
\end{equation}
This may be thought of as a quantum deformation of the cohomology of the 3-sphere $S^3 \hookrightarrow \Omega S^4$
just as, up to degree shifts, the {\it quantum cohomology} of $\mathbb{C}P^1$ (cf. \cite[p. 275]{Witten90}\cite{DGR10}):
\[
  QH^\bullet\big(
    \mathbb{C}P^1
    ;\,
    \mathbb{C}
  \big)
  \;\simeq\; 
  \mathbb{C}[v_2, v_4]\big/
  \big(
    v_2^2 - v_4
  \big)
\]
is a deformation of the ordinary cohomology of $\mathbb{C}P^1$ (and, of course, $S^3$ forms a circle bundle over $\mathbb{C}P^1$, the Hopf fibration, as befits an M-theory lift.)

\medskip

In view of Thm. \ref{PontrjaginRingOfMaxwellFluxObservables} this suggests that passage to Pontrjagin algebras of loop spaces of moduli spaces may generally be a valid form of quantization, 
at least for topological observables, applicable in particular also to higher gauge theories. Since this construction shortcuts
known forms of quantization, one is led to ask how to think of it as a quantization process in more detail: 


\medskip 
\noindent
{\bf Topological Light-cone quantization?} 
A key aspect of Pontrjagin algebras of loops in moduli spaces is \eqref{KKFieldsModuliSpaceAsLoopSpace} that their 
product operation corresponds to sequencing along a singled-out {\it spatial} direction in spacetime. But, since the operator product
order of quantum observables is well-known to reflect their {\it temporal} ordering \cite[p. 35]{Feynman42}\cite[p. 381]{Feynman48} 
(cf., e.g., \cite[pp. 33]{Nagaosa99}), it stands to reason that the Pontrjagin product on loop space homology regarded as a quantum operator-

\vspace{1.8pt}
\hspace{-.85cm}
\def\tabcolsep{3pt}
\begin{tabular}{ll}
\begin{minipage}{13.6cm}
product must be reflecting 
sequences of events which happen by {\it joint} progression along a ``time-axis'' {\it and} along an effectively periodic spatial
direction. This is of course the case in (``discretized'') light-cone quantization (review in \cite{Heinzl01}), such as famously 
used for non-perturbative quantization of Yang-Mills theory (review in \cite[\S 12]{FrishmanSonnenschein10})  and of sectors 
of M-theory (review in \cite[\S 3.9/7.9]{Ydri18}).
\end{minipage}
& \;\;\;
\adjustbox{raise=-.9cm}{
\begin{tikzpicture}[
    xscale=1.1,
    CoilColor/.store in=\coilcolor,CoilColor=black,
    Step/.store in=\Step,Step=0.1,
    Width/.store in=\Width,Width=0.4,
    Coil2/.style={
        decorate,
        decoration={
            markings,
            mark= between positions 0 and 1 step \Step 
            with {
                \begin{scope}[yscale=#1]
                    \pgfmathparse{int(\pgfdecoratedpathlength/28.45*100*\Step)}
                    \edef\Hight{\pgfmathresult}
                    \ifnum\pgfkeysvalueof{/pgf/decoration/mark info/sequence number}=1
                        \path (0,0)++(90: \Hight/200 and \Width) coordinate (b);
                    \fi
                    \ifnum\pgfkeysvalueof{/pgf/decoration/mark info/sequence number}>1
                        \coordinate (b) at (d);
                    \fi
                    \path (b) arc (90:-135: \Hight/200 and \Width) coordinate (a);
                    \path (b) arc (90:-45: \Hight/200 and \Width) coordinate (c);
                    \path (b)++(\Hight/100,0) coordinate (d);
                    \draw[fill,\coilcolor!70!black]
                        (c)
                            .. controls +(-0.175,0) and +(-0.275,0) .. (d)
                            .. controls +(-0.325,0) and +(-0.225,0) .. (c);
                    \draw[white,line width=2pt]
                        (b)
                            .. controls +(0.3,0) and +(0.2,0) .. (c);
                    \draw[fill,\coilcolor]
                        (b)
                            .. controls +(0.275,0) and +(0.175,0) .. (c)
                            .. controls +(0.225,0) and +(0.325,0) .. (b);
                \end{scope}
            }
        }
    }
]

 \draw[
   Coil2=2,
   CoilColor=black,
   Step=0.52,
  ] 
  (0,-.27) -- ++ (0,.75);

\draw[
  white,
  line width=1.2
]
  (-.1,-.4)
    ellipse
  (.804 and .2);
\draw
  (-.1,-.4)
    ellipse
  (.804 and .2);

 \draw[
   Coil2=2,
   CoilColor=black,
   Step=0.52,
  ] 
  (0,-1.11)  -- ++ (0,.75);

  \draw[-Latex]
    (-.9, -1.2)
    to node[yshift=5pt, xshift=4pt, sloped, pos=.7] {\scalebox{.6}{time}}
    (-.9,.8);

\draw[
  bend left=11,
  decorate,
  decoration={
    text along path,
    text align=center,
    text={|\tiny|
      periodic space
  }}
]
  (-.9,-.43) to (.7,-.445);
  
\end{tikzpicture}
}
\end{tabular}

\medskip 
Indeed, the discussion in \cite[\S 4.9]{SS22Configurations} with the result of \cite{CSS23WeightSystems} indicate that the 
method of Pontrjagin algebra quantization applied to topological fluxes sourced by intersecting M-branes reproduces the quantum states of transversal 
M2/M5-brane bound states in discretized light-cone quantization as previously discussed in the BMS matrix model. 

\smallskip

More recently, the result of \cite{GSS24-FluxOnM5}\cite{SS24-AbAnyons}\cite{SS24-TQBits} shows that this method of quantization of topological fluxes also brings out rigorously the widely expected (but previously somewhat informally justified) Chern-Simons link quantum observables on M5-branes, where the topological moduli of the cohomotopically flux-quantized self-dual tensor field on ``open'' M5-branes include the configuration space of points in a plane (inside the M5, transverse to given solitons) which the Pontrjagin-algebra turns into the group algebra of the braid group. 

\smallskip

These results suggest that this is not a coincidence, but part of a general role that Pontrjagin algebras play in quantization of fluxes in (higher) gauge theories, cf. \cite[\S 2]{SS23Intro}.

\appendix

\section{Background and Proofs}
\label{BackgroundAndProofs}

Here we spell out technical details and prove the claims in the previous sections.

\S\ref{BackgroundOnPhaseSpaceOfYangMillsFluxes} -- Phase space of fluxes in Yang-Mills theory

\S\ref{HomotopyTheoryOfTopologicalFieldSectors} -- Homotopy theory of topological field sectors

\subsection{Phase space of Yang-Mills fluxes}
\label{BackgroundOnPhaseSpaceOfYangMillsFluxes}

\noindent
{\bf Phase space of Yang-Mills theory.} We start by recalling the canonical phase space structure of Yang-Mills theory 
in temporal gauge (e.g. \cite[\S 3]{FriedmanPapastamatiou83}\cite[\S 2]{BassettoLazzizzeraSoldati84}), following the insightful 
account of \cite{CattaneoPerez17}.\footnote{The discussion in \cite{CattaneoPerez17} is motivated by Ashtekar's phase space of first-order 
Einstein-gravity, which famously coincides with that of $\mathrm{SU}(2)$-Yang-Mills theory subject to further constraints. 
But these further constraints play no role in \cite{CattaneoPerez17} and the restriction to $G \defneq \mathrm{SU}(2)$ is inessential otherwise.
} We slightly generalize these accounts by admitting any metric Lie algebra as gauge algebra, and by admitting non-trivial 
topological sectors of gauge potentials.

\medskip

\noindent
Consider:
\begin{itemize}[leftmargin=.5cm]
\setlength\itemsep{2pt}
\item
$\mathfrak{g}$ a finite-dimensional real metric Lie algebra, hence equipped with a symmetric $\mathrm{ad}$-invariant 
non-degenerate (but not necessarily definite) bilinear form
$
  \langle -,-\rangle
  \,:\,
  \mathfrak{g} \otimes \mathfrak{g}
  \to
  \mathbb{R}
  \,.
$

\item
$G$ a Lie group with Lie algebra $\mathfrak{g}$, being the Yang-Mills gauge/structure group.

\item 
$X$ a smooth 3-manifold, thought of as a Cauchy surface in globally hyperbolic spacetime $\mathbb{R}^{0,1} \times X$.

\item 
  $\widehat X \twoheadrightarrow X$ any differentiably good open cover,
  i.e., $\hat X \,\defneq\, \underset{i \in I}{\sqcup} U_i$ for $\big\{U_i \xhookrightarrow{ \iota_i } X \big\}_{i \in I}$ a set 
  of open subsets which cover, $\underset{i \in I}{\cup} U_i \,=\, X$, and all whose non-empty finite intersections are diffeomorphic
  to an open ball.

\item 
  $g_{\bullet,\bullet} \,:\, \widehat{X} \times_X \widehat X \to G$ a smooth (``transition''-)function, hence with components $g_{i j} \,:\, U_i \cap U_j \to G$,
  such that on any $U_i \cap U_j \cap U_k$ we have
  $g_{i j} \cdot g_{j k} = g_{i k}$,
  encoding a topological sector (a $G$-principal bundle) $P$ of $G$-Yang-Mills theory on $\mathbb{R}^{0,1} \times X$.

\end{itemize}

\bigskip

\noindent
Write:

\smallskip 
\begin{itemize}[leftmargin=.5cm]
\setlength\itemsep{3pt}
\item
$\Omega^\bullet_{\mathrm{dR}}\big(\widehat{X}; \, \mathfrak{g}\big) \;\defneq\; \Omega^\bullet_{\mathrm{dR}}\big(\widehat{X}\big) \otimes \mathfrak{g}$ 
for the de Rham complex of smooth differential forms on $\widehat X$ with coefficients in $\mathfrak{g}$,
\item
$[-,-] \,:\, \Omega^\bullet_{\mathrm{dR}}\big(\widehat{X}; \mathfrak{g}) \otimes \Omega^\bullet_{\mathrm{dR}}\big(\widehat{X};\,  \mathfrak{g}\big)  \to \Omega^\bullet_{\mathrm{dR}}\big(\widehat{X};\,  \mathfrak{g}\big)$ for the induced super-Lie bracket, given by the wedge product of differential 
forms in the given order, tensored with the Lie bracket of their coefficients,
\item
$\langle -,-\rangle \,:\, \Omega^\bullet(X; \mathfrak{g}_P) \otimes \Omega^\bullet(X; \mathfrak{g}_P) \to \Omega^\bullet(X)$ for the induced 
graded pairing, given by the wedge product of differential forms in the given order tensored with the pairing of their coefficients -- and 
then regarded as a plain differential form on $X$, via the $\mathrm{ad}$-invariance of the pairing,
\item
$\Gamma_{T X} \big(TP/G\big) \,=\, \big\{ A \,\in\, \Omega^1_{\mathrm{dR}}(\widehat X) \otimes \mathfrak{g} \,\big\vert\, \forall_{i j}\; A_j 
\,=\, 
\differential g_{i j} + 
\mathrm{Ad}_{g_{i j}}(A_i) \;\mbox{on}\; U_i \cap U_j   \big\}$ for the set of gauge potentials in the sector $P$ (principal connections on $P$),
\item
$\Omega^\bullet_{\mathrm{dR}}(X; \mathfrak{g}_P) \;=\, \big\{\omega \,\in\,  \Omega^\bullet_{\mathrm{dR}}(\widehat{X}) \otimes \mathfrak{g} \,\big\vert\, \forall_{i,j} \;\; \omega_j = \mathrm{Ad}_{g_{i j}}(\omega_i) \;\mbox{on}\; U_{i} \cap U_j \big\}$ for the de Rham complex of smooth 
differential forms on $X$ with values in sections of the $P$-adjoint bundle,
\item
$\differential_A \,:\,\Omega^\bullet_{\mathrm{dR}}(X;\mathfrak{g}_P) \to \Omega^{\bullet }_{\mathrm{dR}}(X;\mathfrak{g}_P)$ for the
covariant de Rham differential (of degree=1) with respect to a given $A \,\in\, \Gamma_{T X}(T P / G)$: $\differential_A \omega \,\defneq\, \differential\, \omega + [A , \omega]$,

\item
$F_A \,\defneq\, \differential \, A + \tfrac{1}{2}[A, A] \,\in\, \Omega^2(X;\, \mathfrak{g}_P)$ for the magnetic flux density of $A$ (the curvature form).

\end{itemize}

\medskip

\noindent
Now: 

\newpage 
\noindent
The phase space of $G$-Yang-Mills theory in temporal gauge on $\mathbb{R}^{0,1}\times X$ with respect to the background field sector
$P$ is globally coordinatized by:

\begin{itemize}[leftmargin=.5cm]

\item 
$A \,\in\, \Gamma_{T X}(TP/G)$, the gauge potential in temporal gauge, serving as the canonical coordinate,

\item $E \,\in\, \Omega^2_{\mathrm{dR}}(X; \mathfrak{g}_P)$ the electric flux density, constituting the canonical momentum,
\end{itemize} 
hence with the non-trivial Poisson bracket being
\begin{equation}
  \label{CanonicalPoissonBracket}
  \Big\{
    \textstyle{\int_X} 
    \big\langle
      \omega
      ,
      E
    \big\rangle
    ,\,
    A(\hat x)
  \Big\}
  \;\;
  =
  \;\;
  \omega(\hat x)
  \hspace{1.2cm}
  \mbox{for}
  \;
    \omega 
      \in 
    \Omega^1_{\mathrm{dR}}(X; \mathfrak{g}_P)_{\mathrm{cpt}}
\end{equation}
\noindent
subject to a first-order constraint:
\begin{itemize}[leftmargin=.5cm]
\setlength\itemsep{2pt}
\item the {\it Gau{\ss}-Law}
\begin{equation}
  \label{GaussLaw}
  \differential_A E \;=\; 0
  \,.
\end{equation}
\end{itemize}

\medskip

\noindent
{\bf Linear flux observables in Yang-Mills theory.}
Consider in addition:

\begin{itemize}[leftmargin=.5cm]
\setlength\itemsep{2pt}
\item
$\Sigma \hookrightarrow X$ a closed oriented 2-dimensional submanifold (not necessarily connected), being the surface through which
electromagnetic flux is to be observed,
\item
$\Phi_E^\alpha \,:= \textstyle{\int_\Sigma} \langle \alpha, E \rangle$ for $\alpha \,\in\, \Omega^0_{\mathrm{dR}}(\Sigma, \mathfrak{g}_P)$, 
the electric flux through $\Sigma$ integrated against a weight $\alpha$,
\item
$\Phi_B^\beta := \textstyle{\int_\Sigma} \langle \beta, F_A \rangle$ for $\beta \in \Omega^0_{\mathrm{dR}}(\Sigma, \mathfrak{g}_P)$, 
the magnetic flux through $\Sigma$ integrated against weight $\beta$,
\end{itemize}
where the weights (or: ``smearing functions'') are smooth Lie-algebra valued functions, precisely:
\begin{equation}
  \label{SmearingFunction}
  \alpha,
  \, 
  \beta
  \;\in\;
  \Omega^0_{\mathrm{dR}}(\Sigma; \mathfrak{g}_P)
  \;\simeq\;
  \Gamma_\Sigma\big(
    \mathfrak{g} \rtimes_{\mathrm{ad}} P
  \big)
  \,.
\end{equation}

\smallskip 
The subtlety pointed out and resolved in \cite{CattaneoPerez17} is that these $\Phi^\alpha_E$, $\Phi^\beta_{B}$ are not technically observables on the 
phase space, since their would-be associated Hamiltonian vector fields are not smooth; but that gauge-equivalent regularized 
observables are obtained by considering:
\smallskip 
\begin{itemize}[leftmargin=.5cm]
\setlength\itemsep{2pt}
\item
$\widehat{\Sigma} \hookrightarrow X$ the exterior component of a tubular neighborhood of $\Sigma$ in $X$, hence a non-compact 
3-dimensional submanifold with boundary $\partial \widehat{\Sigma} = \Sigma$;
\item
$\widehat{\Phi}_E^\alpha \,:=\, \textstyle{\int_{\widehat{\Sigma}}} \langle \differential_A \alpha, E \rangle$ for 
compactly supported $\alpha \,\in\, \Omega^0_{\mathrm{dR}}\big(\widehat{\Sigma};\mathfrak{g}_P\big)_{\mathrm{cpt}}$, the electric flux observable;
\item
$\widehat{\Phi}_E^\alpha \,:=\, \textstyle{\int_{\widehat{\Sigma}}} \langle \differential_A \beta, F_A \rangle$ for compacty supported $\beta \,\in\, \Omega^0_{\mathrm{dR}}\big(\widehat{\Sigma};\mathfrak{g}_P\big)_{\mathrm{cpt}}$, the magnetic flux observable.
\end{itemize}

\smallskip

\noindent
The above two forms of the magnetic flux observable are actually equal, due to the Bianchi identity,
\begin{equation}
  \def\arraystretch{1.5}
  \begin{array}{ll}
    \mathllap{\widehat{\Phi}_B^\beta}
    \;\defneq\;
    \textstyle{\int_{\widehat{\Sigma}}}
    \,
    \langle 
    \differential_A \beta , F_A \rangle
    &
    \proofstep{
      by definition
    }
    \\
    \;=\;
    \textstyle{\int_{\widehat{\Sigma}}}
    \,
    \langle 
    \differential_A \beta , F_A \rangle
    +
    \textstyle{\int_{\widehat{\Sigma}}}
    \,
    \langle 
     \beta
     ,\, 
     \underbrace{
       \differential_A F_A 
     }_{ 0 }
     \rangle
     &
     \proofstep{
       Bianchi ident.
     }
     \\[-10pt]
     \;=\;
     \textstyle{\int_{\widehat{\Sigma}}}
     \,
     \differential
     \,
     \langle
       \beta,\, F_A
     \rangle
     &
     \proofstep{
       Leibniz rule \& ad-invariance
     }
     \\
     \;=\;
     \textstyle{\int_{\Sigma}}
     \,
     \differential
     \langle
       \beta,\, F_A
     \rangle
     &
     \proofstep{
       by invariance
     }
     \\
     \;=\;
     \Phi_B^\beta
     &
     \proofstep{
       by definition
       ,
     }
  \end{array}
\end{equation}
and the analogous computation, but now using the Gau{\ss} law \eqref{GaussLaw} in place of the Bianchi identity, shows that the two forms of the electric flux observables coincide on the
constraint surface (i.e. up to a term proportional to $\differential_A E$):
\begin{equation}
  \widehat{\Phi}_E^\alpha
  \;\approx\;
  \Phi_E^\alpha
  \,.
\end{equation}

With the canonical Poisson bracket \eqref{CanonicalPoissonBracket}, one finds the Poisson brackets of these regularized
electromagnetic linear flux observables, first for electric/electric fluxes (cf. \cite[(7)]{CattaneoPerez17})
\begin{equation}
  \def\arraystretch{1.8}
  \begin{array}{ll}
    \Big\{
      \widehat{\Phi}_E^\alpha
      ,\,
      \widehat{\Phi}_E^\beta
    \Big\}
    \;\defneq\;
    \Big\{
    \textstyle{\int_{\widehat{\Sigma}}}
    \,
    \langle
      \differential_A
      \alpha
      ,\,
      E
    \rangle
    ,\,
    \textstyle{\int_{\widehat{\Sigma}}}
    \,
    \langle
      \differential_A
      \beta
      ,\,
      E
    \rangle
    \Big\}
    &
    \proofstep{
      by definition
    }
    \\
    \;=\;
    \Big\{
    \textstyle{\int_{\widehat{\Sigma}}}
    \,
    \langle
      \differential_A
      \alpha
      ,\,
      E
    \rangle
    ,\,
    \textstyle{\int_{\widehat{\Sigma}}}
    \,
    \big\langle
      [A,
      \beta]
      ,\,
      E
    \big\rangle
    \Big\}
    +
    \Big\{
    \textstyle{\int_{\widehat{\Sigma}}}
    \,
    \big\langle
      [A,\alpha]
      ,\,
      E
    \big\rangle
    ,\,
    \textstyle{\int_{\widehat{\Sigma}}}
    \,
    \langle
      \differential_A
      \beta
      ,\,
      E
    \rangle
    \Big\}
    &
    \proofstep{
      expanding
    }
    \\
    \;=\;
    \textstyle{\int_{\widehat{\Sigma}}}
    \,
    \Big\langle
      [
        \differential_A \alpha
        ,\,
        \beta
      ]
      ,\,
      E
    \Big\rangle
    -
    \textstyle{\int_{\widehat{\Sigma}}}
    \,
    \Big\langle
      [
        \alpha
        ,\,
        \differential_A \beta
      ]
      ,\,
      E
    \Big\rangle
    &
    \proofstep{
      by \eqref{CanonicalPoissonBracket}
    }
    \\
    \;=\;
    \textstyle{\int_{\widehat{\Sigma}}}
    \Big\langle
      \differential_A
      [\alpha,\, \beta]
      ,\,
      E
    \Big\rangle
    \;\defneq\;
    \widehat{\Phi}_E^{[\alpha, \beta]}
    &
    \proofstep{
      Leibniz rule,
    }
  \end{array}
\end{equation}
and for the electric/magnetic fluxes (a statement which seems not to have been recorded before):
\begin{equation}
  \def\arraystretch{1.8}
  \begin{array}{ll}
    \Big\{
      \widehat{\Phi}_E^\alpha
      ,\,
      \widehat{\Phi}_B^\alpha
    \Big\}
    \;\defneq\;
    \Big\{
    \textstyle{\int_{\widehat{\Sigma}}}
    \,
    \langle
      \differential_A
      \alpha
      ,\,
      E
    \rangle
    ,\,
    \textstyle{\int_{\widehat{\Sigma}}}
    \,
    \langle
      \differential_A
      \beta
      ,\,
      F_A
    \rangle
    \Big\}
    &
    \proofstep{
      by definition
    }
    \\
    \;\defneq\;
    \Big\{
    \textstyle{\int_{\widehat{\Sigma}}}
    \,
    \langle
      \differential_A
      \alpha
      ,\,
      E
    \rangle
    ,\,
    \textstyle{\int_{\widehat{\Sigma}}}
    \,
    \big\langle
      \differential\, \beta
      +
      [A,
      \beta]
      ,\,
      \differential A
      + 
      \tfrac{1}{2}[A,A]
    \big\rangle
    \Big\}
    &
    \proofstep{
      expanding
    }
    \\
    \;=\;
    \textstyle{\int_{\widehat{\Sigma}}}
    \,
    \big\langle
      [\differential_A \alpha, \beta]
      ,\,
      F_A
    \big\rangle
    +
    \textstyle{\int_{\widehat{\Sigma}}}
    \,
    \big\langle
      \differential_A \beta
      ,\,
      \differential_A
      \, 
      \differential_A 
      \,
      \alpha
    \big\rangle
    &
    \proofstep{
      by \eqref{CanonicalPoissonBracket}
    }
    \\
    \;=\;
    \textstyle{\int_{\widehat{\Sigma}}}
    \,
    \big\langle
      [\differential_A \alpha, \beta]
      ,\,
      F_A
    \big\rangle
    +
    \textstyle{\int_{\widehat{\Sigma}}}
    \,
    \big\langle
      \differential_A \beta
      ,\,
      [F_A
      ,\,
      \alpha]
    \big\rangle
    &
    \proofstep{
      by definition
    }
    \\
    \;=\;
    \textstyle{\int_{\widehat{\Sigma}}}
    \,
    \big\langle
      [\differential_A \alpha, \beta]
      ,\,
      F_A
    \big\rangle
    +
    \textstyle{\int_{\widehat{\Sigma}}}
    \,
    \big\langle
      [\alpha,\,\differential_A \beta]
      ,\,
      F_A
    \big\rangle
    &
    \proofstep{
      cycl. invariance
    }
    \\
    \;=\;
    \textstyle{\int_{\widehat{\Sigma}}}
    \,
    \big\langle
      \differential_A 
      [\alpha, \beta]
      ,\,
      F_A
    \big\rangle
    \;\defneq\;
    \widehat{\Phi}_B^{[\alpha, \beta]}
    &
    \proofstep{
      Leibniz rule
      ,
    }
  \end{array}
\end{equation}
where we used that 
$
  \langle
    -
    ,
    [-
    ,
    -]
  \rangle
  \,:\,
  \mathfrak{g}
  \otimes
  \mathfrak{g}
  \otimes
  \mathfrak{g}
  \to 
  \mathbb{R}
$
is invariant under cyclic permutations.

\medskip

This completes the proof of Thm. \ref{PhaseSpaceOfYangMillsFluxes}.

\subsection{Homotopy theory of topological field sectors}
\label{HomotopyTheoryOfTopologicalFieldSectors}

\noindent
{\bf Topology of fields vanishing at infinity.}
As usual in algebraic topology, we work in the category $\mathrm{kTopSp}$ of compactly-generated topological spaces (for pointers 
see \cite[p. 21]{SS21EquivariantBundles}, and we will just say ``topological spaces'', for short), where for $X, Y, Z \,\in\, \mathrm{kTopSp}$ 
the mapping spaces $\mathrm{Map}(-,-)$ and the product spaces $(-) \times (-)$ are related (``Cartesian closure'') by natural 
homeomorphisms of the form
\medskip 
$$
  \mathrm{Map}
  \big(
    X \times Y
    ,\,
    Z
  \big)
  \;\simeq\;
  \mathrm{Maps}
  \big(
    X
    ,\,
    \mathrm{Maps}(Y,\, Z)
  \big)
  \,.
$$
This property is inherited by the category of pointed spaces 
$$
 \mathrm{kTopSp}^{\ast/} 
  \;:=\; 
  \big\{
    X \in \mathrm{kTop}
    ,\, 
    \infty_X \in X
  \big\}
$$
(which here we think of as spaces equipped with a ``point at infinity'', see around \cite[Ntn. 3.3]{SS23Mf} for more), now with 
respect to the mapping sub-space $\mathrm{Map}^{\ast/}(-,-)$ of point-preserving maps  and the ``smash product'' $(-) \wedge (-)$, 
which identifies everything ``at infinity'' with a single point at infinity:
\bigskip 
\begin{equation}
  \label{SmashProduct}
  X
  ,
  Y
  \;\in\;
  \mathrm{kTopSp}^{\ast}
  \hspace{1cm}
  \vdash
  \hspace{1cm}
  X \wedge Y
  \;:=\;
  \frac{
    X \times Y
  }{
    X \!\times\! \{\infty_X\}
    \;\cup\;
    \{\infty_Y\} \!\times\! Y
  }
  \,,
\end{equation}
in that
\medskip 
\begin{equation}
  \label{InternalHomIsoForPointedSpaces}
  \mathrm{Map}^{\ast/}\big(
    X \wedge Y
    ,\,
    Z
  \big)
  \;\simeq\;
  \mathrm{Map}^{\ast/}\big(
    X
    ,\,
    \mathrm{Map}^{\ast/}(
      Y
      ,\,
      Z
    )
  \big).
\end{equation}

\medskip

There are several ways to turn a topological space $X$ into a pointed topological space. We write:
\begin{itemize}
\item
  $X_{\plus}$ for $X$ with a disjoint base point adjoined, so that none of the original points of $X$ is ``at infinity''. 
  Accordingly, preserving a disjoint point at infinity is no extra condition on maps:
  \medskip 
\begin{equation}
  \label{DisjointBasePointIsLeftAdjoint}
  \mathrm{Map}^{\ast/}\big(
    X_\plus
    ,\,
    -
  \big)
  \;\simeq\;
  \mathrm{Map}(
    X
    ,\,
    -
  )
  \,.
\end{equation}

\item
  $X_\compact$ for $X$ with a basepoint adjoined whose open neighborhoods are the complements of closed compact subsets of $X$ 
  (called the ``Alexandroff one-point compactification'' of $X$, cf. \cite[pp. 5]{Cutler20}). This means that continuous paths may 
  reach the ``point at infinity''.

\item $X$ for connected spaces, regarded as pointed by any one of their points (as on the right of
\eqref{OnePointCompactificationOfEuclideanSpace} and \eqref{BasedLoopSpace} below). 
\end{itemize}

\smallskip

\noindent
For example: 
\begin{itemize}[leftmargin=.5cm]
\item
  Identifying the ``ends'' (cf. \cite{Peschke90}) of Euclidean space with a point at infinity yields a sphere (cf. \cite[p. 7]{SS23Mf}):
  \medskip 
  \begin{equation}
    \label{OnePointCompactificationOfEuclideanSpace}
    (\mathbb{R}^{n \geq 1})_{\compact}
    \underset{
      \mathrm{homeo}
    }{\;\;\;\simeq\;\;\;}
    S^n
    .
  \end{equation}
\item
Adjoining the point at infinity to a product is the smash product of the factor with their separate points-at-infinity \cite[Prop. 1.6]{Cutler20}:
\begin{equation}
  \label{OnePointCompactificationTakesProductsToSmashProducts}
  (
    X \times Y
  )_{\compact}
  \;\simeq\;
  X_\compact \wedge Y_\compact
  \,.
\end{equation}
For example, with \eqref{OnePointCompactificationOfEuclideanSpace} this gives
\begin{equation}
  \label{SmashProductOfSpheres}
  S^{n_1}
  \wedge
  S^{n_2}
  \;\;
  \simeq
  \;\;
  S^{n_1 + n_2}.
\end{equation}
\item If a space $\Sigma$ is already compact, then the adjoined point at infinity is disjoint:
\begin{equation}
  \label{AdjoiningPointAtInfinityToCompactSpace}
  \mbox{$\Sigma$ compact}
  \hspace{1cm}
   \Rightarrow
  \hspace{1cm}
  \Sigma_{\compact}
  \;\simeq\;
  \Sigma_{\plus}\;.
\end{equation}

\item
  The (reduced) suspension
  
  \begin{equation}
    \label{Suspension}
    S^1 \wedge X_{\plus}
    \;\defneq\;
    \frac{
      \mathbb{R}^1_{\compact}
      \times
      X
    }{
      \{\infty\} \times X 
    }
  \end{equation}
  is to be thought of as the cylinder $\mathbb{R}^1 \times X$ with both ends regarded as being at infinity.
\item
If $\Sigma$ is already compact, then its suspension \eqref{Suspension}
is equivalently the compactification of its product with the real line:
\begin{equation}
  \label{CompactificationOfProductOfCompactSpaceWithLine}
  \def\arraystretch{1.2}
  \begin{array}{lll}
    \big(
      \mathbb{R}^1 
      \times
      \Sigma
    \big)_{\compact}
    &\simeq\;
      \mathbb{R}^1_\compact 
      \wedge
      \Sigma_\compact
    &
    \proofstep{by
      \eqref{OnePointCompactificationTakesProductsToSmashProducts}
    }
    \\
    &\simeq\;
      S^1
      \wedge
      \Sigma_\plus
    & 
    \proofstep{by
      \eqref{OnePointCompactificationOfEuclideanSpace}
      \&
      \eqref{AdjoiningPointAtInfinityToCompactSpace}.
    }
  \end{array}
\end{equation}
\item
The based loop space of any $X \,\in\, \mathrm{kTopSp}^{\ast/}$ is the pointed mapping space from the circle:
\begin{equation}
  \label{BasedLoopSpace}
  \Omega X
  \;=\;
  \mathrm{Map}^{\ast/}(
    S^1
    ,\,
    X
  )
  \,,
\end{equation}
where the circle is equipped with any basepoint.

\item
The $n$-th homotopy group of $X \,\in\, \mathrm{kTopSp}^{\ast/}$ is the connected components 
of the pointed mapping space out of the $n$-sphere:
\medskip 
\begin{equation}
  \label{HomotopyGroupsViaPointedMappingSpaces}
  \pi_n(X, \infty_X)
  \;=\;
  \pi_0
  \,
  \mathrm{Map}^{\ast/}\big(
    S^n
    ,\,
    X
  \big)
  \,.
\end{equation}

\item With $G$ a topological group, its classifying space $B G$ (pointers in \cite[\S 2.3]{SS21EquivariantBundles}) 
is connected and its based loop space \eqref{BasedLoopSpace} 
is weakly homotopy equivalent \eqref{WeakHomotopyEquivalence} to the underlying space of $G$:
\begin{equation}
  \label{LoopSpaceOfClassifyingSpace}
  \Omega B G 
  \underset{\mathrm{whe}}{\;\simeq\;} 
  G
  \,.
\end{equation}
For example, 
\medskip 
\begin{equation}
  \label{CircleAsClassifyingSpaceOfIntegers}
  S^1 
    \underset{\mathrm{whe}}{\;\simeq\;} 
  B \mathbb{Z}\;.
\end{equation}
\end{itemize}

For classifying spaces of (topological sectors of) physical fields, we are to think of their point at infinity 
as classifying the {\it vanishing field}, because the constant pointed classifying map, which factors as 
\medskip 
$$
  X \xrightarrow{\quad} 
  \{\infty_{B G}\}
  \longhookrightarrow
  B G
  \,,
$$
classifies the trivial field. It is in this way that general pointed classifying maps literally {\it vanish at infinity} (cf. \cite[\S 2.1]{SS22Configurations}\cite[Rem. 2.3]{SS23Mf}):
\medskip 
\begin{equation}
  \label{ClasifyingMapsVanishingAtInfinity}
  \begin{tikzcd}[row sep=small]
    X 
    \ar[rr]
      && 
    B G
    \\
    \{\infty_X\}
    \ar[u, hook]
    \ar[rr]
    &&
    \{\infty_{B G}\}
    \mathrlap{\,.}
    \ar[u, hook]
  \end{tikzcd}
\end{equation}

For example, the familiar classification of Yang-Mills instanton sectors --- as $\mathrm{SU}(2)$-valued gauge fields on $\mathbb{R}^4$ 
which vanish at infinity --- is obtained as follows:
\begin{equation}
  \label{YMInstantonSectors}
  \def\arraystretch{1.3}
  \begin{array}{lll}
    \pi_0
    \,
    \mathrm{Map}^{\ast/}\big(
      \mathbb{R}^4_{\compact}
      ,\,
      B \mathrm{SU}(2)
    \big)
    &
   \simeq\;
    \pi_0
    \,
    \mathrm{Map}^{\ast/}\big(
      S^4
      ,\,
      B \mathrm{SU}(2)
    \big)
    &
    \proofstep{by 
      \eqref{OnePointCompactificationOfEuclideanSpace}
    }
    \\
    &\simeq\;
    \pi_0
    \,
    \mathrm{Map}^{\ast/}\big(
      S^3 \wedge S^1
      ,\,
      B \mathrm{SU}(2)
    \big)
    &
    \proofstep{by 
      \eqref{OnePointCompactificationTakesProductsToSmashProducts}
    }
    \\
    &\simeq\;
    \pi_0
    \,
    \mathrm{Map}^{\ast/}\Big(
      S^3 
      ,\,
      \mathrm{Map}^{\ast/}\big(
        S^1
        ,\,
        B \mathrm{SU}(2)
      \big)
   \! \Big)
    &
    \proofstep{by 
      \eqref{InternalHomIsoForPointedSpaces}
    }
    \\
    &\simeq\;
    \pi_0
    \,
    \mathrm{Map}^{\ast/}\big(
      S^3 
      ,\,
      \underbrace{\mathrm{SU}(2)}_{
        \simeq S^3
      }
    \big)
    &
    \proofstep{by 
      \eqref{LoopSpaceOfClassifyingSpace}
    }
    \\[-3pt]
    &\simeq\;
    \pi_3(S^3)
    \;=\;
    \mathbb{Z}
    &
    \proofstep{by \eqref{HomotopyGroupsViaPointedMappingSpaces}.
    }
  \end{array}
\end{equation}

\medskip

\noindent
{\bf Homotopy theory of topological field sectors.}
A {\it weak homotopy equivalence}  is a continuous map that induces isomorphisms on 
all homotopy groups \eqref{HomotopyGroupsViaPointedMappingSpaces}:
\begin{equation}
  \label{WeakHomotopyEquivalence}
  f \,:\,
  X 
    \xrightarrow[\; \mathrm{whe} \;]{\sim}
  Y
  \hspace{.8cm}
  \Leftrightarrow
  \hspace{.8cm}
  \left\{\!\!\!
  \def\arraystretch{1.5}
  \begin{array}{rcl}
  \pi_0(f)
  \,:\,\pi_0(X)
  &
  \xrightarrow{\sim}
  &
  \pi_0(Y),
  \\
  \underset{
    \scalebox{.8}{$
    {
    n \in \mathbb{N}_{\geq 1}
    }
    \atop
    { x \in X } 
    $}
  }{\forall}
  \;
  \pi_n(f,x)
  \,:\,
  \pi_n(X,x)
  &
  \xrightarrow{\sim}
  &
  \pi_n\big( Y, f(y) \big)
  \mathrlap{\,,}
  \end{array}
  \right.
\end{equation}

\vspace{1mm} 
\noindent making $X$ and $Y$ be ``the same for all purposes'' of topological homotopy theory (see \cite[Ex. 1.1]{Char} for review).

\smallskip

\noindent
For example: 

\begin{itemize}[leftmargin=.5cm]

\item 
The smash product \eqref{SmashProduct}
of smooth manifolds with contractible smooth manifolds\footnote{
  The assumption in \eqref{SmashProductWithContractibleManifold} that $X$ and $Y$ be smooth manifolds is not necessary for this statement,
  we make it only for brevity of the discussion.
  A sufficient condition is that $X$ and $Y$ admit the structure of CW-complexes (which is the case for smooth manifolds by the triangulation theorem).
} is the identity up to weak homotopy equivalence, and the pointed mapping space construction is insensitive, up to weak homotopy equivalence,
to maps out of contractible manifolds:
\begin{equation}
  \label{SmashProductWithContractibleManifold}
  \left.
  \def\arraystretch{1.4}
  \begin{array}{l}
  X, Y \,\in\, \mathrm{SmthMfd}
  ,
  \\
  X 
    \underset{\mathrm{whe}}{\;\simeq\;} 
  \ast
  \end{array}
  \right\}
  \hspace{.8cm}
    \Rightarrow
  \hspace{.8cm}
  \left\{
  \def\arraystretch{1.8}
  \begin{array}{l}
    X \wedge Y
      \underset{\mathrm{whe}}{\;\simeq\;} 
    Y 
    \,,
    \\
    \mathrm{Map}^{\ast/}(
      X \wedge Y
      ,\,
      -
    )
      \underset{\mathrm{whe}}{\;\simeq\;} 
    \mathrm{Map}^{\ast/}(
      Y
      ,\,
      -
    )
    \mathrlap{\,.}
  \end{array}
  \right.
\end{equation}

\smallskip 
\item
If $\Sigma$ and $Y$ have the structure of smooth manifolds with $\Sigma$ compact, then there is a weak homotopy equivalence
\eqref{WeakHomotopyEquivalence} from the Fr{\'e}chet manifold of smooth functions $\Sigma \xrightarrow{\scalebox{0.6}{$ \mathrm{smooth}$}} Y$ 
to the mapping space of the underlying topological spaces --- an instance of the 
{\it smooth Oka principle} \cite[Thm. 3.3.63]{SS21EquivariantBundles}: 
\medskip 
\begin{equation}
  \label{SmoothOkaPrinciple}
  C^\infty\big(
    \Sigma
    ,\,
    Y
  \big)
  \underset{
    \mathrm{whe}
  }{\;\simeq\;}
  \mathrm{Map}(\Sigma,\, Y)
  \,.
\end{equation}
As such, this depends only on the weak homotopy type of $\Sigma$ and $Y$ themselves. For instance,
if $Y \,\defneq\, \mathbb{R}^n/\mathbb{Z}^n$ is a torus, whose weak homotopy type is that of the 
classifying space $B \mathbb{Z}^n$ \eqref{CircleAsClassifyingSpaceOfIntegers}, then 
\medskip 
\begin{equation}
  \label{SmoothOkaIntoTorus}
  C^\infty\big(
    \Sigma
    ,\,
    \mathbb{R}^n/\mathbb{Z}^n
  \big)
  \underset{
    \mathrm{whe}
  }{\;\simeq\;}
  \mathrm{Map}(
    \Sigma,
    \, 
    B \mathbb{Z}^n
  )
  \,.
\end{equation}
\end{itemize}

\medskip

\noindent
{\bf Cohomology classifying topological field sectors.}
That the space $B G$ \eqref{LoopSpaceOfClassifyingSpace} is ``classifying'' refers to the homotopy classes of maps into it corresponding to isomorphism 
classes of $G$-principal bundles over smooth manifolds (at least), hence to the degree=1 non-abelian cohomology with coefficients in $G$ (cf. \cite[Ex. 2.2]{Char}):
\smallskip 
\begin{equation}
  \pi_0
  \,
  \mathrm{Map}\big(
    \Sigma
    ,\,
    B G
  \big)
  \;\simeq\;
  H^1(\Sigma;\, G)\;.
\end{equation}

In the case when $G \,\defneq\, A$ is abelian there is (abelian) topological group structure on $B A$ itself so that we iteratively 
obtain higher classifying space $B^{n+1} A \,:=\, B^n A$. For $A$ discrete, these are ``Eilenberg-MacLane spaces'' $K(A,n)$ which 
classify ordinary cohomology in higher degrees (cf. \cite[Ex. 2.1]{Char}):
\medskip 
\begin{equation}
  \label{OrdinaryCohomology}
  \pi_0
  \,
  \mathrm{Map}\big(
    \Sigma
    ,\,
    B^n A
  \big)
  \;\simeq\;
  H^n(\Sigma;\, A)
  \,.
\end{equation}

Notice that, under this equivalence, the usual group structure on ordinary cohomology comes from the pointwise group structure of maps into a topological group.

Hence in general, for any topological group $G$, we may think of maps into a $B G$ as 1-cocycles of (possibly non-abelian) $G$-cohomology, 
and of homotopies between such maps as coboundaries between the corresponding cocycles. In this sense, the mapping space into $B G$ is 
the {\it cocycle space} of $G$-cohomology (\cite[Def. 2.1]{Char}) and its connected components are the (non-abelian) cohomology classes:

\begin{equation}
  \label{NonabelianCohomology}
  \mathrm{Map}(
    \Sigma
    ,\,
    B G
  )
  \;\;\;
    =
  \;\;\;
  \left\{
  \begin{tikzcd}
    \Sigma
    \ar[
      rr,
      bend left=50,
      "{ c }"{description, name=s},
      "{
        \scalebox{.7}{
          \color{gray}
          cocycle
        }
      }"
    ]
    \ar[
      rr,
      bend right=50,
      "{ c' }"{description, name=t},
      "{
        \scalebox{.7}{
          \color{gray}
          cocycle
        }
      }"{swap}
    ]
    \ar[
      from=s,
      to=t,
      shorten=2pt,
      Rightarrow,
      "{
        \scalebox{.7}{
          \color{gray}
          coboundary
        }
      }"{description}
    ]
    &&
    B G
  \end{tikzcd}
  \right\}
  ,\,\hspace{.5cm}
  H^1(\Sigma;\, G)
  \;\;
    :=
  \;\;
  \pi_0\,
  \mathrm{Map}(
    \Sigma
    ,\,
    B G
  )
  \,.
\end{equation}

\smallskip 
Similarly, the loop space of the mapping space into $B G$ is the mapping space into $G$ 
\begin{equation}
  \label{LoopSpaceOfMappingIntoBG}
  \def\arraystretch{1.8}
  \begin{array}{lll}
    \Omega
    \,
    \mathrm{Map}\big(
      \Sigma
      ;\,
      B G
    \big)
    &
    \simeq\;
    \mathrm{Map}^{\ast/}\Big(
      S^1
      ,\,
    \mathrm{Map}^{\ast/}\big(
      \Sigma_\plus
      ;\,
      B G
    \big)
    \Big)
    &
    \proofstep{
     by
     \eqref{BasedLoopSpace}
     \&
     \eqref{DisjointBasePointIsLeftAdjoint}
    }
    \\
    &\;\simeq\;
    \mathrm{Map}^{\ast/}\Big(
      \Sigma_+
      ,\,
    \mathrm{Map}^{\ast/}\big(
      S^1
      ;\,
      B G
    \big)
    \Big)
    &
    \proofstep{
      by
      \eqref{InternalHomIsoForPointedSpaces}
    }
    \\
    &\;\simeq\;
    \mathrm{Map}^{\ast/}\big(
      \Sigma_\plus
      ,\,
      G
    \big)
    &
    \proofstep{
      by
      \eqref{LoopSpaceOfClassifyingSpace}
    }
    \\
    &\;\simeq\;
    \mathrm{Map}\big(
      \Sigma
      ,\,
      G
    \big)
    &
    \proofstep{
      by
      \eqref{DisjointBasePointIsLeftAdjoint},
    }
  \end{array}
\end{equation}
whose connected components are the 0-cohomology with coefficients in $G$:
\begin{equation}
  \label{ZeroCohomologyWithCoefficientsInATopologicalGroup}
  H^0(\Sigma;\, G)
  \;\;
  :=
  \;\;
  \pi_0
  \,
  \mathrm{Map}(\Sigma,\, G)
  \,.
\end{equation}

\begin{remark}[Ordinary cohomology with topological group coefficients]
\label{OrdinaryCohomologyWithTopologicalGroupCoefficients}
Notice, with \eqref{CircleAsClassifyingSpaceOfIntegers}, that 
\smallskip 
\begin{equation}
  \label{ZerothCircleCohomology}
  H^0(
    \Sigma
    ;\,
    S^1
  )
  \;\simeq\;
  H^1(
    \Sigma
    ;\,
    \mathbb{Z}
  )
\end{equation}

\vspace{1mm} 
\noindent and beware that the usual notation ``$H^n(\Sigma; \mathrm{U}(1))$'' tacitly refers to the circle 
coefficient understood with its 
{\it discrete} topology, hence is quite different. To make this explicit, if we write $\flat A$ for the underlying discrete group of 
a topological abelian group, then the usual notion of ordinary cohomology with coefficients in $A$ is $H^n(\Sigma; \flat A)$ 
in the above notation \eqref{OrdinaryCohomology}.
\end{remark}

\medskip
The following derivations are standard for homotopy theorists but may serve as instructive examples of the above notions for other readers :
\begin{lemma}
\begin{equation}
  \label{MappingSpaceIntoCircle}
  \mathrm{Map}(\Sigma, S^1)
  \underset{
    \mathrm{whe}
  }{\;\simeq\;}
  H^1(
    \Sigma
    ;\,
    \mathbb{Z}
  )
  \times
  B
  \big(
  H^0(
    \Sigma
    ;\,
    \mathbb{Z}
  )
  \big)
  .
\end{equation}
\end{lemma}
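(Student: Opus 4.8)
The plan is to compute all homotopy groups of $\mathrm{Map}(\Sigma, S^1)$ using the machinery already assembled in \cref{HomotopyTheoryOfTopologicalFieldSectors}, and then to reassemble the space from these groups by exploiting that it is a topological abelian group. First I would observe that $\mathrm{Map}(\Sigma, S^1)$ carries a pointwise topological abelian group structure, inherited from the abelian group $S^1 \simeq \CircleGroup$. This has two consequences used throughout: each translation is a homeomorphism (so the homotopy groups are independent of basepoint, and all path components are homeomorphic to the component of the constant map), and the space therefore has the weak homotopy type of a product of Eilenberg--MacLane spaces $\prod_n B^n\pi_n$.

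Next I would compute the homotopy groups. In degree $0$, using $S^1 \underset{\mathrm{whe}}{\simeq} B\mathbb{Z}$ from \eqref{CircleAsClassifyingSpaceOfIntegers} together with \eqref{NonabelianCohomology} (equivalently \eqref{OrdinaryCohomology}) gives $\pi_0\,\mathrm{Map}(\Sigma, S^1) \simeq H^1(\Sigma;\,\mathbb{Z})$. For the positive-degree groups I would apply \eqref{LoopSpaceOfMappingIntoBG} with $G \defneq \mathbb{Z}$ (so that $B G \simeq S^1$), which yields $\Omega\,\mathrm{Map}(\Sigma, S^1) \underset{\mathrm{whe}}{\simeq} \mathrm{Map}(\Sigma,\, \mathbb{Z})$. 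Since $\mathbb{Z}$ is discrete and $\Sigma$ is compact, $\mathrm{Map}(\Sigma,\, \mathbb{Z})$ is a discrete space of locally constant functions, whose only nonvanishing homotopy is $\pi_0\,\mathrm{Map}(\Sigma,\, \mathbb{Z}) = H^0(\Sigma;\,\mathbb{Z})$ by \eqref{ZeroCohomologyWithCoefficientsInATopologicalGroup}. Passing through the loop space then gives $\pi_1\,\mathrm{Map}(\Sigma, S^1) \simeq H^0(\Sigma;\,\mathbb{Z})$ and $\pi_n\,\mathrm{Map}(\Sigma, S^1) = 0$ for all $n \geq 2$.

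Finally I would assemble the pieces. Feeding the computed groups $\pi_0 \simeq H^1(\Sigma;\,\mathbb{Z})$ in degree $0$, $\pi_1 \simeq H^0(\Sigma;\,\mathbb{Z})$ in degree $1$, and vanishing groups in all higher degrees into the product-of-Eilenberg--MacLane-spaces decomposition produces exactly $H^1(\Sigma;\,\mathbb{Z}) \times B\big(H^0(\Sigma;\,\mathbb{Z})\big)$, which is the claimed weak homotopy equivalence \eqref{MappingSpaceIntoCircle}.

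The only real obstacle is the passage from the bare list of homotopy groups to the genuine product decomposition, and this is precisely where the topological abelian group structure is indispensable: it both trivializes the dependence of the homotopy groups on the basepoint and, via translation homogeneity, lets one split off the discrete factor $H^1(\Sigma;\,\mathbb{Z})$ as the indexing set of the (mutually homeomorphic) path components. Once this is in hand the identity component has a \emph{single} nonzero homotopy group in positive degree, so no Postnikov $k$-invariant can obstruct its identification with $B\big(H^0(\Sigma;\,\mathbb{Z})\big)$, and the decomposition is forced.
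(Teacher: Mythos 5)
Your proof is correct and follows essentially the same route as the paper's: compute $\pi_0 \simeq H^1(\Sigma;\mathbb{Z})$, identify $\Omega\,\mathrm{Map}(\Sigma,S^1) \underset{\mathrm{whe}}{\simeq} \mathrm{Map}(\Sigma,\mathbb{Z})$ to get $\pi_1 \simeq H^0(\Sigma;\mathbb{Z})$ and vanishing higher homotopy groups, and then use translation in the topological group $\mathrm{Map}(\Sigma,S^1)$ to split off the discrete set of components and identify the identity component as $B\big(H^0(\Sigma;\mathbb{Z})\big)$. Your two small variations --- deducing the vanishing of $\pi_{n\geq 2}$ from discreteness of $\mathrm{Map}(\Sigma,\mathbb{Z})$ (valid since $\Sigma$ is compact) rather than from $\Omega^{n \geq 2} S^1 \underset{\mathrm{whe}}{\simeq} \ast$ as the paper does, and the optional appeal to the general splitting of topological abelian groups into products of Eilenberg--MacLane spaces, which your own elementary fallback via translation homogeneity renders unnecessary --- are both sound and do not change the substance of the argument.
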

\begin{proof}
First, observe that the connected components are
\[
  \def\arraystretch{1.3}
  \begin{array}{lll}
    \pi_0
    \,
    \mathrm{Map}(
      \Sigma
      ,\,
      S^1
    )
    &\simeq\;
    H^0\big(
      \Sigma
      ;\,
      S^1
    \big)
    &
    \proofstep{
      by
      \eqref{OrdinaryCohomology}
    }
    \\[-1pt]
    &
    \simeq\;
    H^1\big(
      \Sigma
      ;\,
      \mathbb{Z}
    \big)
    &
    \proofstep{
      by
      \eqref{ZerothCircleCohomology}.
    }
  \end{array}
\]
Moreover, the fundamental group at the basepoint is
\begin{equation}
  \label{FundamentalGroupOfMappingSpaceIntoCircle}
  \def\arraystretch{1.5}
  \begin{array}{lll}
    \pi_1\big(
      \mathrm{Map}(
        \Sigma
        ,\,
        S^1
      )
      ,\,
      0
    \big)
    &\simeq\;
    \pi_0\big(
      \Omega
      \,
      \mathrm{Map}(
        \Sigma
        ,\,
        S^1
      )
    \big)
    &
    \proofstep{
      by
      \eqref{HomotopyGroupsViaPointedMappingSpaces}
    }
    \\
    &\simeq\;
    \pi_0\big(
      \mathrm{Map}(
        \Sigma
        ,\,
        \mathbb{Z}
      )
    \big)
    &
    \proofstep{
      by
      \eqref{LoopSpaceOfMappingIntoBG}
      \&
      \eqref{CircleAsClassifyingSpaceOfIntegers}
    }
    \\
    &\simeq\;
    H^0(\Sigma;\,\mathbb{Z})
    &
    \proofstep{
      by
      \eqref{OrdinaryCohomology}.
    }
  \end{array}
\end{equation}
Also, all higher homotopy groups at the basepoint vanish:
\begin{equation}
  \def\arraystretch{1.7}
  \begin{array}{lll}
    \pi_{n \geq 2}\big(
      \mathrm{Map}(
        \Sigma
        ,\,
        S^1
      )
      ,\,
      0
    \big)
    &\simeq\;
    \pi_0\big(
      \Omega^{n\geq 2}
      \,
      \mathrm{Map}(
        \Sigma
        ,\,
        S^1
      )
    \big)
    &
    \proofstep{
      by
      \eqref{HomotopyGroupsViaPointedMappingSpaces}
    }
    \\
    &\simeq\;
    \pi_0\Big(
      \mathrm{Map}\big(
        \Sigma
        ,\,
        \underbrace{
          \Omega^{n\geq 2}
          \,
        S^1}_{
          \mathclap{
            \underset
              {\mathrm{whe}}
              {\simeq} 
            \ast
          }
        }
      \big)
    \Big)
    &
    \proofstep{
      by
      \eqref{LoopSpaceOfMappingIntoBG}
      \&
      \eqref{SmashProductOfSpheres}
    }
    \\[-10pt]
    &\simeq
    \ast
    \,.
  \end{array}
\end{equation}
Hence the connected component of the neutral element of $\mathrm{Map}(\Sigma,\, S^1)$ is weakly homotopy equivalent to 
$B\big( H^1(\Sigma;\, \mathbb{Z}) \big)$.

But the topological group structure on $\mathrm{Map}(\Sigma, \, S^1)$ implies that the multiplication operation of any of its
elements constitutes a homeomorphism from the connected component of that element to that of the neutral element. Therefore, 
 the underlying spaces of all connected components are isomorphic. This implies \eqref{MappingSpaceIntoCircle}.
\end{proof}

\smallskip
\noindent
{\bf Homology Pontrjagin algebras.}
We consider homology with complex coefficients, throughout.\footnote{Much further interesting structure appears 
when considering homology with integer coefficients, but this is beyond the intended scope of the present note.}
For $\Gamma \,\in\, \Groups(\Sets)$ a discrete group, regarded as the loop space of its classifying space, 
$\Gamma \underset{\mathrm{whe}}{\,\simeq\,} \Omega B G$, see \eqref{LoopSpaceOfClassifyingSpace},
its homology is just the linear span of the underlying set
\begin{equation}
  \label{HomologyOfDiscreteGroup}
    H_\bullet(
      \Gamma
      ;\,
      \mathbb{C}
    )
    \;=\;    H_0(
      \Gamma
      ;\,
      \mathbb{C}
    )
    \;\simeq\;
    \mathbb{C}[\Gamma]
    ,\,
    \hspace{1cm}
  \begin{tikzcd}[row sep=-3pt, column sep=tiny]
    \Gamma 
    \ar[
      rr,
      hook
    ]
    &&
    \mathbb{C}[\Gamma]
    \\
    \gamma &\longmapsto&
    \mathcal{O}(\gamma)
  \end{tikzcd}
\end{equation}
so that 
\medskip 
$$
  \mathbb{C}[
    \Gamma
  ]
  \;=\;
  \bigg\{
  \textstyle{
  \underset{
    \gamma \in \Gamma
  }{\sum}
  }
  \,
  c_\gamma \cdot
  \mathcal{O}(\gamma)
  \;\Big\vert\;
  c_\gamma \in \mathbb{C}
  , 
  \mbox{
    $\mathrm{supp}(c_{(-)})$
    is finite
  }
 \!\! \bigg\}
  \,.
$$
Furthermore,  the Pontrjagin product \eqref{HomologyPontrjaginAlgebraonLoopSpaceOfFields} is the convolution product of the 
group algebra (\cite[\S III.13]{Weyl31}, cf. \cite[\S 3.4]{FultonHarris91}):
\smallskip 
\begin{equation}
  \label{DiscreteConvolutionProduct}
  \begin{tikzcd}[row sep=0pt]
    \mathbb{C}[
      \Gamma
    ]
    \otimes_{{}_\ComplexNumbers}
    \mathbb{C}[
      \Gamma
    ]
    \ar[
      rr,
      "{
        (-) \cdot (-)
      }"
    ]
    &&
    \mathbb{C}[
      \Gamma
    ]
    \\
    \mathcal{O}(\gamma)
    \otimes
    \mathcal{O}(\gamma')
    &\longmapsto&
    \mathcal{O}(\gamma \cdot \gamma')
      \end{tikzcd}
\end{equation}
\medskip 
\begin{equation*}
    \def\arraystretch{1.6}
  \begin{array}{l}
  \bigg(
  \textstyle{
    \underset{\gamma \in \Gamma}{\sum}
    \,
    c_\gamma
    \cdot
    \mathcal{O}(\gamma)
  }
 \! \bigg)
  \cdot
  \bigg(
  \textstyle{
    \underset{\gamma' \in \Gamma}{\sum}
    \,
    c'_{\gamma'}
    \cdot
    \mathcal{O}(\gamma')
  }
 \! \bigg)
  \;
    =
  \;
  \bigg(
  \textstyle{
    \underset{\gamma'' \in \Gamma}{\sum}
    \,
    \Big(
      \,
      \textstyle{
        \underset{
          \gamma \in \Gamma
        }{\sum}
      }
      c_\gamma
      \cdot
      c'_{
        \gamma^{-1} 
        \cdot
        \gamma''
    }
    \Big)
    \cdot
    \mathcal{O}(\gamma'')
  }
 \! \bigg)
  \,.
  \end{array}
     \end{equation*}

     \smallskip 
\noindent 
Now, since homology is invariant under weak homotopy equivalence \eqref{WeakHomotopyEquivalence}, 
i.e., 
\smallskip 
\begin{equation}
  \label{HomologyInvariantUnderWHE}
  X \underset{
    \mathrm{whe}
  }{\;\simeq\;}
  Y
  \hspace{.8cm}
  \Rightarrow
  \hspace{.8cm}
  H_\bullet(
    X;\,
    \mathbb{C}
  )
  \simeq
  H_\bullet(
    Y
    ;\,
    \mathbb{C}
  )
  \,,
\end{equation}
and by the K{\"u}nneth Theorem (e.g. \cite[Cor. 3B.7]{Hatcher02})
\smallskip 
\begin{equation}
  \label{KunnethTheorem}
  H_n(
    X \times Y
    ;\,
    \mathbb{C}
  )
  \;
    \simeq
  \;\;\;
  \underset{
    \mathclap{
      n_1 + n_2 = n
    }
  }{\bigoplus}
\;\;   H_{n_1}(
    X;\, \mathbb{C}
  )
  \otimes_{{}_\mathbb{C}}
  H_{n_2}(
    Y;\, \mathbb{C}
  )\,,
\end{equation}
this remains the case in degree=0 for products with connected CW-complexes (whose homology in degree 0 is $\mathbb{C}$), 
such as with classifying spaces of abelian groups, $\Gamma \times B A \underset{\mathrm{whe}}{\,\simeq\,} \Omega B( \Gamma \times B A)$:
\begin{equation}
  \label{ZeroHomologyOfGammaTimesConnected}
  H_0(
    \Gamma \times B A
    ;\,
    \mathbb{C}
  )
  \;\simeq\;
  H_0(\Gamma;\, \mathbb{C})
  \,.
\end{equation}

\medskip
\noindent Using all this, we establish Thm. \ref{PontrjaginRingOfMaxwellFluxObservables} as follows:
$$
  \def\arraystretch{1.8}
  \begin{array}{ll}
    H_0\bigg(
     \mathrm{Map}^{\ast/}
     \Big(
       \mathbb{R}^{0,1}_{\plus}
       \wedge
       (\mathbb{R}^1 \times \Sigma)_{\compact}
       ,\,
       B \big(
          \CircleGroup^2
          \rtimes 
          K
       \big)
     \Big)
     \,;\,
     \mathbb{C}
    \bigg)
    \\
    \;\simeq\;
    H_0\Big(
      \Omega
      \,
      \mathrm{Map}\big(
        \Sigma
        ,\,
        B\big(
          \CircleGroup^2
          \rtimes K
        \big)
      \big)
      \,;\,
      \mathbb{C}
    \Big)
    &
    \proofstep{
      by
      \eqref{KKFieldsModuliSpaceAsLoopSpace}
      \&
      \eqref{HomologyInvariantUnderWHE}
    }
    \\
    \;\simeq\;
    H_0\Big(
      \mathrm{Map}\big(
        \Sigma
        ,\,
        \CircleGroup^2
        \rtimes K
      \big)
      \,;\,
      \mathbb{C}
    \Big)
    &
    \proofstep{
      by
      \eqref{TopologicalKKFieldsExpressedOnSigma}
      \&
      \eqref{HomologyInvariantUnderWHE}
    }
    \\
    \;\simeq\;
    H_0\Big(
      H^1\big(
        \Sigma
        ;\,
        \mathbb{Z}
      \big)^2
      \times
      B\big(
        H^0(\Sigma;\, \mathbb{Z})
      \big)
      \rtimes
      H^0(\Sigma;\, K)
      \,;\,
      \mathbb{C}
    \Big)
    &
    \proofstep{
      by
      \eqref{MappingSpaceIntoCircle}
    }
    \\
    \;\simeq\;
    H_0\Big(
      H^1\big(
        \Sigma
        ;\,
        \mathbb{Z}
      \big)^2
      \rtimes
      H^0(\Sigma;\, K)
      \,;\,
      \mathbb{C}
    \Big)
    &
    \proofstep{
      by
      \eqref{ZeroHomologyOfGammaTimesConnected}
    }
    \\
    \;\simeq\;
    \mathbb{C}\big[
      H^1\big(
        \Sigma
        ;\,
        \mathbb{Z}
      \big)^2
      \rtimes
      H^0(\Sigma;\, K)
    \big]
    &
    \proofstep{
      by
      \eqref{HomologyOfDiscreteGroup}.
    }
  \end{array}
$$

\end{document}